\tikzset{sd/.style={state,diamond,minimum size=5mm,inner sep=0pt}}
\tikzset{so/.style={state,circle,inner sep=0pt,minimum size=5mm}}
\newcommand{\myorcid}[1]
{\unskip\texorpdfstring{
\href{https://orcid.org/#1}{\protect\includegraphics[width=10px]{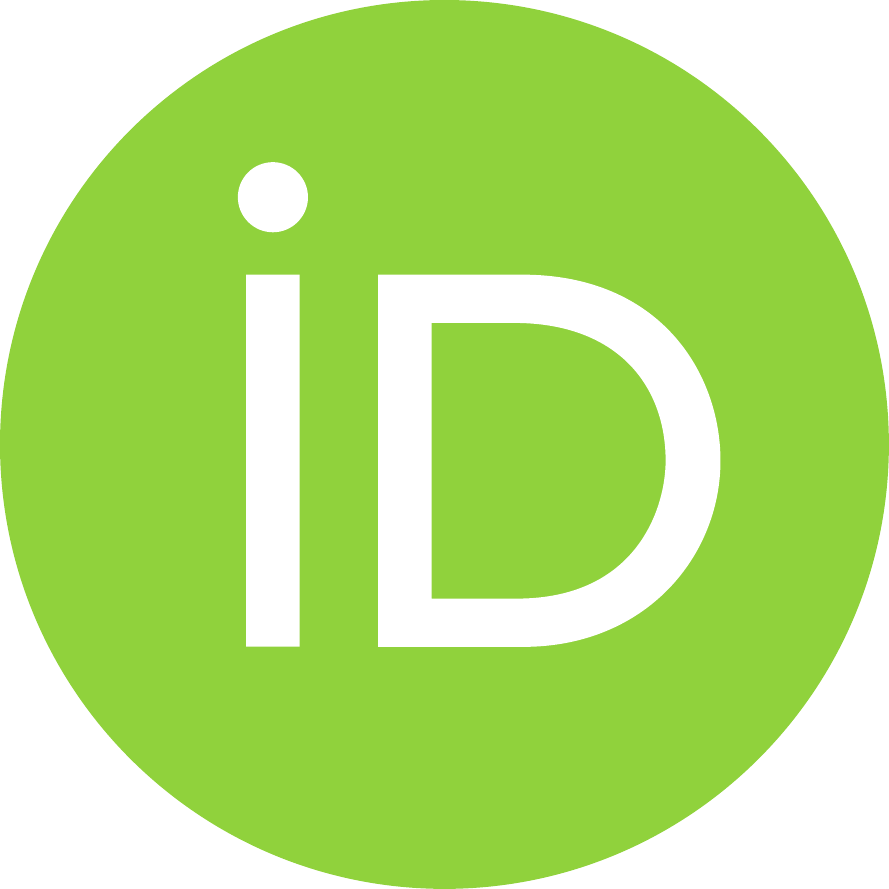}}}{}
}
\newif\iflong
\title{Active learning of timed automata with unobservable resets%
  \thanks{This work was partially funded by ANR project
    Ticktac (ANR-18-CE40-0015).}}
\author{L\'eo Henry \myorcid{0000-0001-6778-5840} \and
  Thierry J\'eron \myorcid{0000-0002-9922-6186} \and
  Nicolas Markey \myorcid{0000-0003-1977-7525}}
\institute{Univ Rennes, Inria, CNRS -- Rennes, France \\
\email{firstname.lastname@inria.fr}}
\begin{document}
    \maketitle

\begin{abstract}
  Active learning of timed languages is concerned with the inference of timed
automata
by observing some of the timed words in their languages.
The~learner can query for the
membership of words in the language, or propose a candidate
model and ask if it is equivalent to the target. The~major difficulty
of this framework is the inference of \emph{clock resets}, which are
central to
the dynamics of timed automata but not directly observable.

Interesting first steps have already been made by
restricting to the subclass of \emph{event-recording automata}, where
clock resets are tied to observations. In~order to advance towards
learning of general timed automata, we~generalize this method to a new
class, called \emph{reset-free} event-recording automata, where some
transitions may reset no clocks. 

Central to our contribution is the notion of \emph{invalidity}, and
the algorithm and data structures to deal with~it, allowing on-the-fly
detection and pruning of reset hypotheses that contradict
observations. 
This notion is a key to any efficient active-learning
procedure for generic timed automata.




\end{abstract}

\section{Introduction}
Active learning~\cite{Ang87a} is a type of learning
in which
a teacher assesses the learner's
progress and direct the learning effort toward meaningful
decisions. The~learner can request information from the teacher
via \emph{membership queries}, asking about a specific observation,
and \emph{equivalence queries}, proposing to compare the current hypothesis
to the correct model; in the latter case, the~teacher either accepts
the hypothesis or returns
a counter-example exemplifying mispredictions of the learner's
hypothesis.

This framework is well-studied in the setting of finite-state
automata~\cite{Ang87a,Ang87b,Ang90}, and allows to make sound proofs
for both correctness and complexity of learning algorithms. As~most
real-life systems dispose of \emph{continuous} components,
attempts have been made to leverage this framework to take
them into account.  One~of the most classic additions
is \emph{time}. An~observation is then a timed word, made of actions and
delays between them. One of the most recognized models for such timed
languages is the timed automaton~(TA), but its dynamics are complex:
TAs~measure time using a set of \emph{clocks} that hold a positive
real value progressing with time, can be compared with integer constants 
to allow or disallow
transitions, and reset to zero along those transitions. For a learning
algorithm, one of the main challenges is to deal with those resets,
that are typically not observable, but play a central role in the
system dynamics.

Some work has already been done in the active learning of subclasses of~TAs,
mostly deterministic TAs with only one clock~\cite{ACZZZ20} and
deterministic event-recording automata (DERA)~\cite{GJP06,Gri08}, which
have as many clocks as actions in the alphabet, and where each clock
encodes exactly the time elapsed since the last corresponding action
was taken. These classes of automata present the advantages of having
a low-dimensional continuous behaviour (for~1-clock~TAs) and to allow
to derive the resets of the clocks directly from the observations (for~DERA).
Other approaches have been investigated for the learning of timed
systems. Learning of TAs from tests has been studied using genetic
algorithms~\cite{TALL19}, which is a very different approach to ours
to exploit a similar setting. Inference of simple TAs from positive
data~\cite{VMC08,VMC12} has also been well studied. These works are
more loosely related to ours, as our setting greatly differs from
positive inference.

We propose in this work to generalize to a class of timed automata
enjoying both several clocks and different possible resets
that can not be inferred directly from observations. 
This allows us to design and prove algorithms that handle all the main 
difficulties that arise in deterministic TAs, making this contribution 
an important first step towards active learning for generic deterministic TAS.

%
To our knowledge, the closest works are Grinchtein's thesis on
active learning of~DERA~\cite{Gri08} and the paper proposing to learn
one clock TAs~\cite{ACZZZ20}. The~work of
Grinchtein~\emph{et~al.}~\cite{GJP06} is
the most related to ours, as we use some of the data structures they
developed and keep the general approach based on timed decision
trees.
The main difference between our work and this one is that we handle
the inference of resets in a class of models in which they can not
be \emph{directly} deduced from observations. The~approach reported
in~\cite{ACZZZ20} proposes to deal with reset guessing, but makes it
in a somewhat "brute force" manner, by directly applying a
branch-and-bound algorithm and jumping from model to~model.  In~order
to be able to deal with larger dimensions, \eg~to~handle TAs with a
large set of clocks, we need to be more efficient by exploiting the
theory built around TAs and detecting invalid models as early as possible.

\iflong
For reasons of space, the proofs of our claims and the pseudo-code of our algorithms are left in appendix.
\else 
For the details of the proofs and algorithms, we refer the reader to the long version~\cite{arxiv}.
\fi

\section{Preliminaries}
\label{sec:prelim}



\subsection{Timed automata}

For the rest of this paper, we~fix a finite
alphabet~$\Sigma$.

Let \(\Clocks[]\) be a finite set of variables called
\emph{clocks}. A~valuation for~$\Clocks[]$
is a function \(v\colon\Clocks[]\rightarrow
\bbR_{\geq0}\).
We write \(\initv\) for
the clock valuation associating \(0\) with all clocks.
%
For any \(\delta\in\bbR_{\geq0}\) and  any valuation \(v\)
we write $v+d$ for the valuation such that \((v+\delta)(x)= v(x)+\delta\)
for each clock~$x$; this  corresponds to elapsing~\(\delta\)
time units from valuation~\(v\).
%
The~future of a valuation~$v$ is the set 
$v\future = \{v+t \mid t\in\bbR_{\geq 0}\}$ of its time successors.
Finally, for any \(\Clocks[]'\subseteq \Clocks[]\) and any valuation \(v\),
we~write $v_{[\Clocks[]'\leftarrow0]}$ for the valuation such that
$v_{[\Clocks[]'\leftarrow0]}(x)=v(x)$ for all~$x\notin \Clocks[]'$
and $v_{[\Clocks[]'\leftarrow0]}(x)=0$ for all~$x\in \Clocks[]'$.

\looseness=-1
Simple clock constraints are expressions of the forms 
\(x-x'\thicksim n\) and \(x\thicksim n\), for \(x,x'\in \Clocks[]\),
\(\mathord\thicksim\in\{\mathord<,\mathord\leq,\mathord=,\mathord\geq,
\mathord>\}\) and \(n\in\bbN\).  We call \emph{zone} over~$\Clocks[]$
any finite conjunction of such constraints, and write 
\(\Zones[{\Clocks[]}]\) for
the set of zones over~\(\Clocks[]\).  Given a valuation~$v$ and a
zone~$z$, we~write $v\models z$ when $v$ satisfies all the constraints
in~$z$.  We~may identify a zone~\(z\) with the set of valuations~$z$
such that $v\models z$.  We~call \emph{guard} any zone not involving
constraints of the form \(x-x'\thicksim n\), and write 
\(\Guards[{\Clocks[]}]\) for the set of guards.
We~extend all three operations on valuations to zones elementwise.

\begin{definition}
  A \emph{timed automaton} (TA) over~\(\Sigma\) is a tuple
  \(\ta=\tuple{\Loc,\initloc,\Clocks[],\Trans,\Accept}\) such that:
    \(\Loc\) is a finite set of \emph{locations}, and \(\initloc\in\Loc\) 
    is the \emph{initial location};
    \(\Clocks[]\) is a finite set of clocks;
    \(\Accept\subseteq\Loc\) is a set of accepting locations;
    \(\Trans\subseteq\Loc\times\Sigma\times\Guards[{\Clocks[]}]\times 2^{\Clocks[]}\times\Loc\) is a set of transitions. 
    For a transition \((\loc,a,g,r,\loc')\), we call \(g\) its guard, \(a\) its action and \(r\) its reset.
\end{definition} 
We~write $K_\ta$ (or~$K$ when the context is clear) for
the~\emph{maximal constant} appearing in~$\ta$.
%
%
%
We say that a TA is \emph{deterministic} when, for any two transitions
\((\loc,a,g,r,\loc')\) and \((\loc,a,g',r',\loc'')\)
where \(g\wedge g'\) is satisfiable, it~holds $\loc'=\loc''$ and $r=r'$. 
We~only consider deterministic TAs in the sequel,
as active-learning methods can only target this (strict) subclass of~TAs.


\begin{definition}
  With a TA
  \(\ta=(\Loc,\initloc,\Clocks[],\Trans,\Accept)\), we~associate
  the transition
  system
  \(\semta=(\Conf=\Loc\times\bbR_{\geq0}^{|\Clocks[]|},(\initloc,\initv),\semT,\Accept_{\semta})\)
  where \(\Loc\times\bbR_{\geq0}^{|\Clocks[]|}\) is the set of
  \emph{configurations}, \((\initloc,\initv)\) is the initial
  configuration, \(\Accept_{\semta}=\{(\loc,v)\mid \loc\in\Accept\}\)
  is the set of accepting configurations, and \(\semT\subset
  \Conf\times(\bbR_{\geq0}\cup\Trans)\times\Conf\) a set of
  transitions, such that for any $(\loc,v)\in \Conf$:
  (a)~for any~$\delta\in\bbR_{\geq 0}$, we~have
    $((\loc,v),\delta,(\loc,v+\delta))$ in~$\semT$;
  (b)~for any $e=(\loc,a,g,r,\loc')\in \Trans$ s.t. $v\models g$, we have
    $((\loc,v),e,(\loc',v_{[r\leftarrow 0]}))$ in~$\semT$.
\end{definition}

A~path in a timed automaton~$\ta$ is a sequence of transitions in the associated
transition system~$\semta$.
%
A~\emph{timed word with resets} of~\(\ta\) is a path
\(w_{tr}=((l_i,v_i)\xrightarrow{e_i}(l_{i+1},v_{i+1}))_{i\in[0,n]}\in(\Conf\times(\semT\cup\bbR_{\geq0}))^{*}\times\Conf\)
of its semantics \(\semta\).
A~timed word with resets is \emph{accepting} when its
final configuration is in \(\Accept_{\semta}\).

%

In order to obtain a finite representation of the infinite set of
timed words with resets, we use an abstraction based on the following notion of
\emph{$K$-equivalence}.
\begin{definition}
Two nonnegative reals $x$ and $y$ are \emph{$K$-equivalent}, 
 noted \(\keq{x}{y}\), when either
    \(x>K\) and \(y>K\), or
    \(x=y\) are integers, or
    \(x\) and \(y\) are non-integers and they have the same integral part. 
    Two~valuations~$v$ and~$v'$ are $K$-equivalent if
    $\keq{v(x)}{v'(x)}$ for all~$x\in\Clocks[]$. 
    We say that two configurations are $K$-equivalent 
    when their valuations are, and that two timed words with reset are 
    $K$-equivalent when they have the same size and the configurations of 
    same indices in both words are $K$-equivalent.
\end{definition}
Notice that $K$-equivalence is coarser than the usual notion of
\emph{region equivalence} of~\cite{AD94}, as it aims to encode direct
indistinguishability by a guard along words, instead of
indistinguishability in the future.

We call \emph{zone-word with resets} a timed word with resets in which all valuations are replaced with
zones.
%
A~timed word with resets~$r=((l_i,v_i)\xrightarrow{e_i}(l_{i+1},v_{i+1}))_{i\in[0,n]}$ is
\emph{compatible} with a zone
word with resets~$zr=((l_i,z_i)\xrightarrow{e_i}(l_{i+1},z_{i+1}))_{i\in[0,n]}$,
written \(r\models zr\), when $v_i\models z_i$ for all~$i$.  We call
\emph{$K$-closed word}
a zone word in which all zones are $K$-equivalence classes.

\begin{lemma}
  For any timed word with reset \(r\) of a (deterministic) timed automaton~\(\ta\),
  there is a unique $K$-closed word~\(zr\) such that
  \(r \models zr\). For any timed word with resets~\(r'\)
  compatible with~\(zr\), \(r'\)~is accepting if, and only~if,
  \(r\)~is.
\end{lemma}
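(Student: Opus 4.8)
The plan is to handle the two assertions separately, reducing each to an elementary property of $K$-equivalence classes of valuations. For existence and uniqueness of the $K$-closed word, I would rely on two facts: every $K$-equivalence class of valuations is a zone (indeed a guard), and these classes partition the set of valuations. For the acceptance statement, I would isolate the indistinguishability property foreshadowed by the remark following the definition of $K$-equivalence — namely that $K$-equivalent valuations satisfy exactly the same guards of $\ta$ — and then observe that acceptance only depends on the final location, which is shared by every word compatible with $zr$.

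For existence and uniqueness, write $r=((l_i,v_i)\xrightarrow{e_i}(l_{i+1},v_{i+1}))_{i\in[0,n]}$. First I would show that the $K$-equivalence class of a single valuation $v$ is the zone $[v]_K$ given by the conjunction over $x\in\Clocks[]$ of: the atom $x=v(x)$ when $v(x)$ is an integer $\le K$; the atom $\lfloor v(x)\rfloor < x < \lfloor v(x)\rfloor+1$ when $v(x)$ is a non-integer $\le K$; and the atom $x>K$ when $v(x)>K$. Each atom has the form $x\thicksim n$ with $n\in\bbN$, so $[v]_K\in\Guards[{\Clocks[]}]\subseteq\Zones[{\Clocks[]}]$. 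Replacing every $v_i$ in $r$ by $[v_i]_K$, keeping the locations and edges, then yields a zone word with resets whose zones are $K$-equivalence classes, i.e. a $K$-closed word $zr$, and $v_i\models[v_i]_K$ gives $r\models zr$. For uniqueness, I would take any $K$-closed $zr'$ with $r\models zr'$; by the compatibility convention it shares the locations and edges of $r$ and can differ only in its zones $z_i'$. Each $z_i'$ is a $K$-equivalence class containing $v_i$, and since distinct classes are disjoint (they partition the valuation space), $z_i'$ is forced to equal $[v_i]_K$, whence $zr'=zr$.

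For acceptance, the key step — and the part I expect to be the main obstacle — is the indistinguishability property: for every guard $g\in\Guards[{\Clocks[]}]$ of $\ta$ and all valuations with $\keq{v}{v'}$, one has $v\models g$ if and only if $v'\models g$. Since a guard is a conjunction of atoms $x\thicksim n$ with $n\le K$, it suffices to establish this for a single such atom, by case analysis on the three cases defining $\keq{v(x)}{v'(x)}$. The delicate point is the threshold $K$: I would have to check that collapsing all values above $K$ into one class stays coherent with every comparison $x\thicksim n$ for $n\le K$, in particular that the boundary behaves correctly for $x\le K$ versus $x<K$ and for $x=K$. Granting this property, I would take any $r'$ compatible with $zr$: at each discrete step its valuation is $K$-equivalent to the corresponding $v_i$ of $r$, hence satisfies the same guard, so $r'$ follows the very same sequence of edges and ends in the location $l_{n+1}$. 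As a timed word with resets is accepting exactly when its final location lies in $\Accept$, this gives that $r'$ is accepting if and only if $r$ is.
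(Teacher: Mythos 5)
Your proof is correct, but there is nothing in the paper to compare it against: the lemma is stated without proof, and the appendix (which proves the proposition on $\lambda$-weighted sums and the later statements about the data structures) never returns to it. On the merits: your existence/uniqueness argument is sound. Each $K$-equivalence class of valuations is indeed the product over the clocks of the classes $\{n\}$ for integers $n\leq K$, $(n,n+1)$ for integers $n<K$, and $(K,\infty)$, hence a guard; and these classes partition the valuation space because $K$-equivalence is an equivalence relation (the only point worth checking is transitivity when one pair of values is related because both exceed $K$ and another because they are non-integers with the same integral part --- this holds, since a non-integer sharing its integral part with a value above $K$ is itself above $K$). The remark worth making concerns your second part: the guard-indistinguishability property, which you flag as the main obstacle, is true but plays no logical role in the statement as the paper formalizes it. The compatibility relation $r'\models zr$ is defined by pattern-matching $r'=((l_i,v'_i)\xrightarrow{e_i}(l_{i+1},v'_{i+1}))_{i\in[0,n]}$ against $zr=((l_i,z_i)\xrightarrow{e_i}(l_{i+1},z_{i+1}))_{i\in[0,n]}$, so a compatible $r'$ shares the locations and edges of $zr$ (hence of $r$) \emph{by definition}; since a timed word with resets is accepting exactly when its final location lies in $\Accept$, and $r'$ ends in $l_{n+1}$ just as $r$ does, the acceptance equivalence is immediate. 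The property you isolate --- that $K$-equivalent valuations satisfy exactly the same guards of $\ta$, all of whose constants are at most $K$ --- is the conceptual justification for the abstraction (it is what the paper's remark about ``direct indistinguishability by a guard'' alludes to), and your case analysis of it is fine; it would become necessary if one wanted the stronger claim that any configuration sequence matching the zones of $zr$ can actually fire the same transitions, but for the lemma as stated it is a detour.
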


\emph{Event recording automata}~(ERA)~\cite{AFH99} are a subclass of
TAs in which there is one clock~$x_a$ per letter~$a$ of the alphabet,
such that $x_a$ is reset exactly along $a$-transitions. We~slightly
extend them as \emph{reset-free ERAs} (RERAs), in which transitions
may or may not reset their clock:
we~let $\Clocks=\{x_a\mid a\in\Sigma\}$, and $\Zones$ and $\Guards$ be shortcuts for \(\Zones[{\Clocks}]\) and \(\Guards[{\Clocks}]\) respectively.
\begin{definition}
  A \emph{reset-free event recording automaton} (RERA) over \(\Sigma\)
  is a TA \(\ta=(\Loc,\initloc,\Clocks,\Trans,\Accept)\) such that
  for all transitions \((\loc,a,g,r,\loc')\in\Trans\), it~holds
  \(r\in\{\{x_a\},\emptyset\}\).
\end{definition}

\begin{example}
  Consider the timed automaton depicted in Fig.~\ref{fig:ta}.  This TA
is actually a RERA, by associating clock \(x\) to letter \(b\) and clock \(y\) to letter \(a\). 
  An~accepting timed word with resets of this automaton is
  \((\initloc,\initv)\xrightarrow{1.5}(\initloc,\icol{1.5\\1.5})\xrightarrow{a,\emptyset}
  (\loc_1,\icol{1.5\\1.5})\xrightarrow{b,\{x\}}(\initloc,\icol{0\\1.5})\xrightarrow{a,\emptyset}(\loc_1,\icol{0\\1.5})\xrightarrow{2}(\loc_1,\icol{2\\3.5})\xrightarrow{a,\emptyset}(\loc_2,\icol{2\\3.5})\).
  The corresponding path is
  \(\initloc\xrightarrow{e_1}\loc_1\xrightarrow{e_2}\initloc\xrightarrow{e_1}\loc_1\xrightarrow{e_3}\loc_2\).
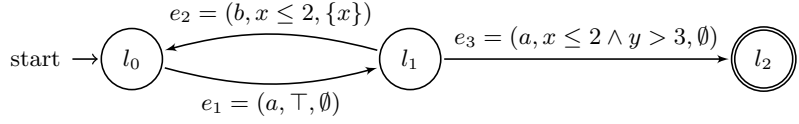
\begin{figure}[t]
  \centering
  \begin{tikzpicture}[->,shorten >=1pt,shorten <=1pt,auto,semithick, node distance = 3.7cm]
    \node[initial,state] (0) {$\initloc$};
    \node[state] (1) [right of=0] {$\loc_1$};
    \node[state,accepting] (2) [right of=1, node distance=4.7cm] {$\loc_2$};
    \path[use as bounding box] (0,-.7);
    \path (0) edge [bend right=15,-latex'] node [below]{$e_1=(a,\true,\emptyset)$}  (1)
          (1) edge [bend right=15,-latex'] node [above] {$e_2=(b,x\leq2,\{x\})$}     (0)
          (1) edge [-latex']            node {$e_3=(a,x\leq2\wedge y>3,\emptyset)$} (2);
  \end{tikzpicture}
  \caption{A simple TA}
  \label{fig:ta}
\end{figure}
\end{example}

Although closely related, ERA and RERA differ in a central way
w.r.t. our learning problem: while the resets of an~ERA can be
directly inferred from observations, in a RERA this is not directly
possible.  Thus, generalizing a learning method from ERA to RERA
requires dealing with the inference of resets---one~of the central
challenges of the learning of general deterministic~TA.


\subsection{Timed languages}
Automata-learning techniques are based on the identification of a
candidate automaton that generalizes the \emph{observations} obtained
during the learning process. Angluin's tabular approach~\cite{Ang87a}
directly identifies a set of observations (\ie~words) having good
properties, and builds a deterministic automaton from~it. Our
contribution, as well as all the active-learning algorithms that we
are aware~of, follow a similar approach.
%
%
An important issue for extending this approach to timed words is the
infinite number of observations fitting even the simplest model, due
to time density.  We~thus have to use good abstractions to represent
classes of these words, and use these classes to direct the learning
process.
A~first such extension was initiated in~\cite{GJP06}.


\looseness=-1
A~timed word with resets of a RERA can be seen as an element of
\((\bbR_{\geq0}\times\Sigma\times\{\true,\false\})^{*}\). A~timed word
is the projection of a timed word with resets
on~$(\bbR_{\geq0}\times\Sigma)^*$; timed words correspond to
observations of timed words with resets.
%

In order to represent infinitely many timed words with resets in a
succinct way, we define \emph{guarded words with resets}
\(w_{gr}\in(\Guards\times\Sigma\times\{\true,\false\})^{*}\), which correspond to
paths in a~RERA.
%
%
For a timed word 
\(w_{t}\) 
and a guarded word with resets 
\(w_{gr}\) 
we say that \emph{\(w_{t}\) satisfies \(w_{gr}\)}, noted 
\(w_{t}\models w_{gr}\), if 
$w_t$ is a possible observation of~$w_{gr}$.
%
We extend this correspondence to timed words with resets by ensuring
that the resets match.
The satisfiability relation between timed words and guarded words with
resets will be central in the rest of the paper, as it relates an
observation to the unfolding of a RERA (or~of our hypothesis).

\begin{example}
  The timed word \(w_t=(1.3,a).(0.4,b)\) satisfies
  the guarded word with reset
  \(w_{gr}=(x_b>1,a,\{x_a\}).(x_a<1,b,\emptyset)\): indeed, \(w_t\)
  and \(w_{gr}\) have the same untimed projection, and the timed word
  with resets
  \(w_{tr}=\initv\xrightarrow{1.3}\icol{1.3\\1.3}\xrightarrow{a,\{x_a\}}\icol{0\\1.3}\xrightarrow{0.4}\icol{0.4\\1.7}\)
  satisfies the guards of \(w_{gr}\).
  Notice that \(w_t \not\models
  w'_{gr}=(x_b>1,a,\emptyset).(x_a<1,b,\emptyset)\), as modifying resets
  changes the valuations that appear in the corresponding timed
  word with resets.
\end{example}

\emph{Zone words with resets} can be seen as elements \(w_z\) of \((\Zones\times\Sigma\times\{\true,\false\})^{*}.\Zones\). 
From a guarded word with resets \(w_{gr}=(g_i,a_i,r_i)_{i\in[0,n]}\) we can define the 
corresponding zone word with resets \(w_z=(z_i,a_i,r_i)_{i\in[0,n]}z_{n+1}\) with \(z_0=\{\initv\}\future\)
and \(z_{i+1}=(z_i\wedge g_i)\future\) if \(r_i=\false\) and 
\(z_{i+1}={(z_i\wedge g_i)_{[x_{a_i}\leftarrow0]}}\future\) otherwise. 

In our learning process, we will manipulate linear combinations of timed words. 
For two timed words \(w_t^1=((t^1_i,a_i))_{i\in[0,n]}\) and 
\(w_t^2=((t^2_i,a_i))_{i\in[0,n]}\) with the same untimed projection,
we define their \emph{$\lambda$-weighted sum}
\(w_t^3=\lambda.w_t^1+(1-\lambda)w_t^2\), as the timed word 
\(w_t^3=((\lambda.t^1_i+(1-\lambda).t^2_i,a_i)_{i\in[0,n]})\). 
Such linear combinations have the following property:
\begin{restatable}{proposition}{clockcomb}
  \label{pr:clock_comb}
  For any two timed words \(w_t^j=(t^j_i,a_i)_{i\in[0,n]}\) for \(j\in\{1,2\}\) with the same untimed projection, for any \(\lambda\in[0,1]\) and for any reset word \((r_i)_{i\in[0,n]}\), 
  all the valuations \(v_{i,r}^3\) reached along \(w_{tr}^3=((\lambda.t^1_i+(1-\lambda).t^2_i,a_i,r_i)_{i\in[0,n]})\) 
  are such that for all clocks \(x_a\in\Clocks\), 
  \(v_{i,r}^3(x_a)=\lambda.v_{i,r}^1(x_a)+(1-\lambda).v_{i,r}^2(x_a)\) for \(v_{i,r}^j\) the valuations reached along \(w_{tr}^j=(t^j_i,a_i,r_i)\).
\end{restatable}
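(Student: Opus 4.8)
The plan is to prove the identity by induction on the index $i$, tracking the value of each clock $x_a \in \Clocks$ separately along the three runs. The key structural observation is that $w_{tr}^1$, $w_{tr}^2$ and $w_{tr}^3$ share the \emph{same} untimed projection $(a_i)_i$ and the \emph{same} reset word $(r_i)_i$; only the delays differ. Hence a given clock $x_a$ is reset at exactly the same positions in all three runs---precisely those indices $i$ with $r_i = \true$ and $a_i = a$---and between two consecutive such positions the clock's value is simply the accumulated delay. Before starting I would record that $w_{tr}^3$ is a legitimate timed word with resets: since $\lambda \in [0,1]$ and each $t_i^j \geq 0$, every weighted delay $\lambda t_i^1 + (1-\lambda) t_i^2$ is nonnegative.

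For the base case, all three runs start in the initial configuration with valuation $\initv$, so $v_{0,r}^3(x_a) = 0 = \lambda\cdot 0 + (1-\lambda)\cdot 0$ for every clock $x_a$. For the inductive step I would split on whether step $i$ resets $x_a$. If it does not (either $r_i = \false$ or $a_i \neq a$), the RERA semantics give $v_{i+1,r}^j(x_a) = v_{i,r}^j(x_a) + t_i^j$ in each run $j$; substituting the induction hypothesis together with $t_i^3 = \lambda t_i^1 + (1-\lambda) t_i^2$ and collecting terms yields the desired affine identity at index $i+1$. If step $i$ does reset $x_a$, then because the reset word is shared the reset fires in all three runs, so $v_{i+1,r}^1(x_a) = v_{i+1,r}^2(x_a) = v_{i+1,r}^3(x_a) = 0$, and the identity again holds trivially.

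These two cases are exhaustive and treat every clock, so the induction closes. I expect no genuine obstacle here: the argument is essentially linearity of summation applied to the delays elapsed since the last reset. The only points demanding care are the bookkeeping for the alternation of delay transitions and discrete (possibly resetting) transitions in the semantics, and making explicit that it is exactly the sharing of the reset word that renders the two cases independent of $\lambda$---without this assumption the reset positions could differ across the runs and the clean affine relation between the valuations would break down.
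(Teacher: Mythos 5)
Your proof is correct and follows essentially the same route as the paper's: an induction along the word, using linearity of the delay accumulation ($t_i^3 = \lambda t_i^1 + (1-\lambda)t_i^2$) for the non-reset case and the fact that the shared reset word zeroes each clock at the same positions in all three runs. Your version merely makes explicit the per-clock case split and the nonnegativity of the combined delays, both of which the paper's proof handles implicitly in the phrase ``by applying the resets commanded by $r$.''
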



\section{Observation structure}
\label{sec:structure}
The general principle of (untimed) active-learning is to learn a model from observations acquired by membership queries and equivalence queries\iflong~\ref{fig:active_learning}\else\fi.
In membership queries, a timed word is provided to a teacher, who in return informs us about the membership of this world in the target language. 
In~an equivalence query, we~propose an hypothesis (model) to the teacher; she~either accepts it if~it~is equivalent to the model we wish to learn, 
or otherwise provides us with a counterexample, \ie, a~timed word that separate the language of the model and that of our hypothesis.
\iflong
\begin{figure}[h]
  \centering
  \begin{tikzpicture}
    \node [draw, rectangle, minimum height = 2cm, text width= 1.5cm] (L) {Learner};
    \node [ rectangle, minimum height = 2cm, right of = L, node distance = 4.5cm] (A) {};
    \node [draw, rectangle, minimum height = 2cm, text width= 1.5cm, right of = A, node distance = 4.5cm] (T) {Teacher};

    \path ( $(L.north east)!0.22!(L.south east)$) edge [->] node [above] {Membership $w_t$?} ($(T.north west)!0.22!(T.south west)$)
          ($(L.north east)!0.72!(L.south east)$) edge [->] node [above] {Equivalence TA?} ($(T.north west)!0.72!(T.south west)$)
          ($(T.north west)!0.28!(T.south west)$) edge [->] node [below] {Observation} ($(L.north east)!0.28!(L.south east)$)
          ($(T.north west)!0.78!(T.south west)$) edge [->] node [below] {Yes/Counterexample} ($(L.north east)!0.78!(L.south east)$);
  \end{tikzpicture}
  \caption{The basic active learning framework}
  \label{fig:active_learning}
\end{figure}
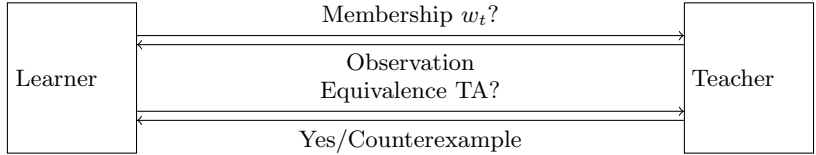
\else\fi
The set of observations is formalized as a partial function~\(\Obs\) mapping words to acceptance status (\(\ltrue\) or~\(\lfalse\)).
To~build a model, we~then want to identify a prefix-closed subset~$U$ of~\(\Dom(\Obs)\) such
that for all letters~$a$ in the alphabet and words~$u\in U$, \(u.a\in\Dom(\Obs)\) and either \(u.a\in U\), or there is another
word \(u'\in U\) having the same observed behaviour as~$u.a$.
When transferring this approach to timed words, one has to deal with
two difficulties: first, the uncountable number of possible delays
before each discrete action; second, the fact that observations do not
include clock valuations (nor clock resets), which we also have to learn.

In this section, we describe the structures used to represent and
process these timed observations acquired during the learning \emph{and} the
decisions on the built structures made based on those observations.
We generalize timed decision trees defined in~\cite{GJP06}, 
so as to encode timed words \emph{with possible resets}.
We basically use a {\em timed decision graph}, a model close to acyclic timed automata, to encode the current knowledge inferred about the model from observations,
and a \emph{timed observation graph}~(TOG) to \emph{implement}~\(\Obs\) with a step of abstraction and help decisions.
Our~data structure is centered around the notion of
\emph{observation structure} composed of a \emph{timed decision
graph}, which stores the current hypothesis (and will later be folded
into a~TA), and an \emph{observation function}, which stores current
observations.
%



\begin{definition}
  An \emph{observation structure} is a pair \((\tdg,\Obs)\) made of a
  {\em timed decision graph} (TDG)  and
  a partial mapping \(\Obs\) from timed words  to $\{\ltrue,\lfalse\}$.
  The TDG is a labelled bipartite graph
  \(\tdg=(\Sta,\Trans)\) with $\Sta=\Stl\uplus\Std$ where: 
  \begin{itemize}
  \item
    \(\Stl\subseteq\{s_0=(\epsilon,\{\initv\}\future)\} \cup (\Guards\times\Sigma\times\{\true,\false\})^{+}\times\Zones\)
     is a set of \emph{language states},
    made of a 
    prefix-closed finite set of guarded words with resets paired with
    zones; \(s_0\)~is the root state.
      \item \(\Std\subseteq \Stl\times\Sigma\times\Guards\)
      is a set of
    \emph{decision states} such that  
    for any \(\stl\in\Stl\) and \(a\in\Sigma\),
    if \(I=\{g\in\Guards\mid (\stl,a,g)\in \Std\}\) is non-empty, then \(\bigvee_{g\in I} g \equiv \true\) and for all
    $g$ and~$g'$ in~$I$,  if $g\neq g'$ then $g\wedge g'\equiv \false$;
  \item \(\Trans\subseteq \Sta\times(\Sigma\times\Guards \cup
    \{\true,\false\})\times\Sta\) is defined such that
    transitions to a decision state \(\std=(\stl,a,g)\) are of the form 
    \((\stl,a,g,\std)\) and if  \(\stl=w_{gr}.z\)
    transitions from \(\std\) are 
    \((\std,\true,(w_{gr}.(g,a,\true),{(z\wedge g)_{[x_a\leftarrow0]}}\future))\) and 
    \((\std,\false,(w_{gr}.(g,a,\false),\penalty1000\relax(z\wedge g)\future))\).
  \end{itemize} 

  The \emph{labelling} of an
  observation structure maps language states to the set of observations compatible with them:
  \[\stlabel(\stl=(w_{gr}.z))=\{\Obs(w_t)\mid w_t\in \Dom(\Obs)\wedge 
  w_t\models w_{gr} \}.\]
\end{definition}
It can be seen from this definition that TDGs are trees 
\iflong (see the proof in Appendix~\ref{app:proofs})\else\fi.
For a guarded word \(w_{gr}\), we note \(\stinit\xrightarrow{w_{gr}}_\tdg \stl\) when
there is a path in~\(\tdg\)  from $\stinit$ to \(\stl\) labelled with \(w_{gr}\), 
and note \(w_{gr}\in\tdg\) when such a path exists.

Observation structures store both the words that have been observed (in \(\Obs\)) and the inferred guards and enforced resets (or~absence thereof) (in \(\tdg\)).
We can extend \(\Obs\) to guarded words with resets by considering them as language states and using their labels. The labels are used to carry the observation information to the TDG.

\begin{figure}[t]
  \begin{subfigure}[b]{.5\linewidth}
    \centering
  \begin{tikzpicture}[->,shorten >=1pt,shorten <=1pt,auto,semithick, node distance = 1cm,scale=0.4]
    \begin{pgfinterruptboundingbox}
      \node[so,initial] (e) {$\lfalse$};
    \end{pgfinterruptboundingbox}
    \path(0,0);
      \node[sd] (a) [below of=e] {};
      \node[so] (at) [left=1.2 of a] {$\lfalse$};
      \node[sd] (ata) [below of= at] {};
      \node[so] (atat) [left of=ata] {$\pm$};
      \node[so] (ataf) [right of=ata] {$\pm$};
      \node[so] (af) [right=1.2 of a] {$\lfalse$};
      \node[sd] (afa) [below of= af] {};
      \node[so] (afat) [left of=afa] {$\pm$};
      \node[so] (afaf) [right of=afa] {$\pm$};
      \path (e) edge node {$a,\gtrue$} (a)
            (a) edge node[above left] {$\true$} (at)
            (at) edge node {$a,\gtrue$} (ata)
            (ata) edge node [below] {$\true$} (atat)
            (ata) edge node [below] {$\false$} (ataf)
            (a) edge node {$\false$} (af)
            (af) edge node {$a,\gtrue$} (afa)
            (afa) edge node [below] {$\true$} (afat)
            (afa) edge node [below] {$\false$} (afaf);
    \end{tikzpicture}
    \subcaption{A first observation structure}
    \label{fig:intro_incons}
  \end{subfigure}
  \begin{subfigure}[b]{.5\linewidth}
    \centering
    \begin{tikzpicture}[->,shorten >=1pt,shorten <=1pt,auto,semithick, node distance = 1.2cm,scale=0.4]
      \node[initial,so] (e) {};
      \node[so] (a) [right of=e] {};
      \node[so,accepting] (aa) [right of=a] {};
      \path (e) edge node [above, yshift=8pt] {$a,\gtrue,\{x_a\}$} (a)
            (a) edge node [below, yshift=-8pt] {$a,x_a\leq1,\emptyset$} (aa);
    \end{tikzpicture}
    \subcaption{The RERA providing observations}
    \label{fig:intro_rera}
  \end{subfigure}
  \caption{An active-learning setting}
\end{figure}
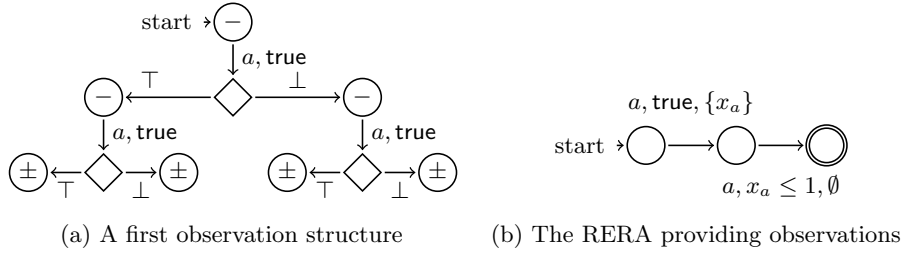
\begin{example}
  \label{ex:tdg}
  Fig.~\ref{fig:intro_incons} represents an observation structure storing some words observed from the RERA in Fig.~\ref{fig:intro_rera}. Language states are depicted as circles and decision states as diamonds. 
  Notice that in this example the leaves have labels of size $2$: they model both accepting and non-accepting observations \eg~\(((0.7,a)(0.9,a),\ltrue)\) and \(((0.7,a)(1.2,a),\lfalse)\).
\end{example}

\par We define some desired properties of information structures.
\begin{definition}
  For an observation structure \((\tdg,\Obs)\), a subtree \(\tdg'\) of 
  \(\tdg\) rooted in \(\stl^{\tdg'}=w'_{gr}.z'\) is said:
  \begin{itemize}
    \item \emph{complete} when all observations in \(\Obs\) are taken into account, i.e.
    for any \(w_t\in \Dom(\Obs)\) such that \(w_t=w'_t.w''_t\) with \(w'_t\models w'_{gr}\) 
    there is \(\stl^{\tdg'}\xrightarrow{w''_{gr}}\stl\) such that \(w''_t\models w''_{gr}\) and for all such \(w''_{gr}\) and \(\stl\), 
    \(\Obs(w_t)\in \stlabel(\stl)\);
    \item \emph{consistent} when it separates accepting and non accepting behaviours, i.e. for any \(\stl\) in the subtree, \(|\stlabel(\stl)|=1\).
  \end{itemize}
  We say that an observation structure is complete or consistent when \(\tdg\) is.
\end{definition}

Detecting and handling inconsistencies is central to our algorithms,
as it characterizes the need to introduce new guards to split language
nodes in the timed decision graph. 
\begin{figure}[htp]
    \begin{subfigure}[b]{.49\linewidth}
      \centering
    \begin{tikzpicture}[->,shorten >=1pt,shorten <=1pt,auto,semithick, node distance = .9cm]
    \begin{scope}
      \node[so,initial] (e) {$\lfalse$};
      \node[sd] (a) [below of=e] {};
        \node[so] (at) [below left=.09 and .9 of a] {$\lfalse$};
        \node[sd] (ata1) [below left=.15 and .8 of at] {};
        \node[so] (ata1t) [below left = .19 and .3 of ata1] {$\ltrue$};
        \node[so] (ata1f) [below right = .19 and .3 of ata1] {$\ltrue$};
        \node[sd] (ata2) [below right=.15 and .8 of at] {};
        \node[so] (ata2t) [below left = .19 and .3 of ata2] {$\lfalse$};
        \node[so] (ata2f) [below right = .19 and .3 of ata2] {$\lfalse$};
        \node[so] (af) [below right=.09 and .7 of a,color=red!80!black] {$\lfalse$};
        \node[sd] (afa) [below right=.15 and .5 of af,color=red!80!black] {};
        \node[so] (afat) [below left = .19 and .3 of afa,color=red!80!black] {$\pm$};
        \node[so] (afaf) [below right = .19 and .3 of afa,color=red!80!black] {$\pm$};

    \end{scope}
        \path (e) edge node {$a,\gtrue$} (a)
              (a) edge node[above left] {$\true$} (at)
              (at) edge node[above left=-.1mm and -2mm] {$a,x_a\leq 1$} (ata1)
              (ata1) edge node[above left] {$\true$} (ata1t)
              (ata1) edge node {$\false$} (ata1f)
              (at) edge node[above right=-.1mm and -2mm] {$a, x_a>1$} (ata2)
              (ata2) edge node[above left] {$\true$} (ata2t)
              (ata2) edge node {$\false$} (ata2f)
              (a) edge [color=red!80!black] node {$\false$} (af)
              (af) edge [color=red!80!black] node {$a,\gtrue$} (afa)
              (afa) edge [color=red!80!black] node[above left] {$\true$} (afat)
              (afa) edge [color=red!80!black] node {$\false$} (afaf);
      \end{tikzpicture}
    \caption{After handling the inconsistency}
    \label{fig:intro_inval}
  \end{subfigure}
  \begin{subfigure}[b]{.49\linewidth}
  \centering
  \resizebox{\textwidth}{!}{
  \begin{tikzpicture}[->,shorten >=1pt,shorten <=1pt,auto,semithick, node distance = .9cm]
    \node[so,initial] (e) {$\lfalse$};
    \node[sd] (a) [below of=e] {};
    \node[so] (at) [below left=.09 and .9 of a] {$\lfalse$};
    \node[sd] (ata1) [below left=.15 and .8 of at] {};
    \node[so] (ata1t) [below left = .19 and .3 of ata1] {$\ltrue$};
    \node[so] (ata1f) [below right = .19 and .3 of ata1] {$\ltrue$};
    \node[sd] (ata2) [below right=.15 and .8 of at] {};
    \node[so] (ata2t) [below left = .19 and .3 of ata2] {$\lfalse$};
    \node[so] (ata2f) [below right = .19 and .3 of ata2] {$\lfalse$};
    \node[so] (af) [below right=.09 and .7 of a] {$\lfalse$};
    \node[sd] (afa) [below right=.15 and .5 of af] {};
    \node[so] (afat) [below left = .19 and .3 of afa] {$\pm$};
    \node[so] (afaf) [below right = .19 and .3 of afa] {$\pm$};
      \path (e) edge node {$a,(0,1)$} (a)
            (a) edge node[above left] {$\true$} (at)
            (at) edge node[above left] {$a,(0,1)$} (ata1)
            (at) edge node {$a,(1,2)$} (ata2)
            (ata1) edge node[above left] {$\true$} (ata1t)
            (ata1) edge node {$\false$} (ata1f)
            (ata2) edge node[above left] {$\true$} (ata2t)
            (ata2) edge node {$\false$} (ata2f)
            (a) edge node {$\false$} (af)
            (af) edge node {$a,(1,2)$} (afa)
            (afa) edge node[above left] {$\true$} (afat)
            (afa) edge node {$\false$} (afaf);
  \end{tikzpicture}
  }
  \caption{A timed observation graph}
  \label{fig:intro_tog}
  \end{subfigure}
\end{figure}
\begin{example}
  The leaves of the TDG in Fig.~\ref{fig:intro_incons} are inconsistent. The inconsistency can be resolved for the left branch by splitting the transition, as made in Fig.~\ref{fig:intro_inval}. This leaves a label of size two in the right branch, but there exists no guard that can separate the observations.
\end{example}



We now define \emph{timed observation graphs}, a structure used to encode the observation function $\Obs$ efficiently and abstractly.
More precisely, it represents the undistinguishable tube around each observation (\ie~the
$K$-closed-words with resets), and allows to detect on-the-fly when two observations sharing the same K-closed word do not agree on acceptance
and when reset combinations cannot happen.

\begin{definition}
  A \emph{timed observation graph} (TOG) is a TDG where all guards and zones correspond to K-equivalence classes, language states are called \emph{observation states} \(\sto\in\Sto\) and transitions from decision to observation states do not use the future operator, \ie~for \((\std=(w.z),\true,\sto)\in \Trans\), \(\sto=w.(g,a,\true).g_{[x_a\leftarrow 0]}\) and same for \(\false\). We add a labelling \(\olabel\colon \Sto\rightarrow\calP(\{\ltrue,\lfalse\})\) for observation states and 
  \(\words\colon \Sto\rightarrow\calP((\bbR_{\geq0}.\Sigma)^{*})\) 
  a function associating to each observation state a set of observations that it represents. 
  For two observation states~$\sto$ and~$\sto'$, we note \(\sto\xrightarrow{w_{zr}}\sto'\) if there is a path 
  from \(\sto\) to~\(\sto'\) and there exists a zone word \(w.z\) such that 
  \(\sto=w.z\) and \(\sto'=w.w_{zr}\).
\end{definition}
As for TDGs, TOGs are trees%
\iflong\ (see the proof in appendix~\ref{app:proofs})\else\fi.
Timed observation graphs will allow to detect impossible combinations of resets denoted by labels of observation states
of cardinality larger than one.
This is ensured by an encoding of~$\Obs$ into the~TOG, in~a way
defined as follows:
\begin{definition}
  A timed observation graph \(\tog\) is said to \emph{implement}
  an observation function~\(\Obs\) when the following two
  conditions are fulfilled:
  \begin{description}
  \item[Correspondence:] all observations are encoded in the TOG, i.e. for
    all \(w_t\in \Dom(\Obs)\), for any \(w_{tr}\) compatible with
    \(w_t\), there is a path \(\steps\xrightarrow{w_{zr}}\sto=w_{zr}\)
    in \(\tog\) such that \(w_{tr}\models w_{zr}\), \(w_t\in
    \words(\sto)\) and \(\Obs(w_t)\in \olabel(\sto)\);
  \item[Coverage:] all observation states are covered by
    \(\Dom(\Obs)\), i.e. for any \(\sto=(w_{zr})\in \Sto\),
    \(\words(\sto)\neq\emptyset\) and for any \(w_t\in \words(\sto)\),
    \(w_t\in \Dom(\Obs)\), \(w_t\models w_{zr}\) and \(\Obs(w_t)\in
    \olabel(\sto)\).
  \end{description}
\end{definition}

\begin{example}
  \label{ex:tog}
  The TOG in Fig.~\ref{fig:intro_tog} corresponds to the observation structure displayed in our previous examples. Notice that it has a label of size two on the leafs of the right branch.
\end{example}

The pruning of the timed decision graph  relies on \emph{invalidity} of words and states, our key contribution to the active learning framework for timed automata.
It allows to characterize reset combinations that are impossible for a given $K$-closed word.
This complements inconsistency and allows to prune resets and schedule guards to be
added when resets are not tied to observations.
\begin{definition}
  A \emph{$K$-closed} word with reset
  \(w_{zr}=(z_i,a_i,r_i)_{i\in[0,n]}z\) is \emph{invalid} with respect
  to an observation graph \(\tog\) if one of the following conditions holds:
  \(|\olabel(w_{zr})|=2\), or
a prefix of \(w_{zr}\) is invalid w.r.t. $\tog$, or
there exists \(z_{n+1},a_{n+1}\)  such that
  both \((z_i,a_i,r_i)_{i\in[0,n]}.(z_{n+1},a_{n+1},\true){z_{n+1}}_{[a\leftarrow0]}\)
  and \((z_i,a_i,r_i)_{i\in[0,n]}.(z_{n+1},\penalty1000\relax a_{n+1},\false)z_{n+1}\)
  are  invalid w.r.t.~$\tog$.

  A zone word with reset (or a guarded word with reset) is invalid if
  it models an invalid $K$-closed word with reset.
\end{definition}

Invalid guarded words with resets encode behaviours that can not correspond to any
model, and thus should be pruned in the TDG: 
\begin{restatable}{proposition}{prsound}
  \label{pr:sound}
  If a timed observation graph \(\tog\) has an invalid observation state  \(\sto=w_{zr}\), there is no TA model having execution  \(w_{zr}\). 
\end{restatable}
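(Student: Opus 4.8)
The plan is to prove the statement by well-founded induction on the derivation witnessing that $w_{zr}$ is invalid, since invalidity is an inductive notion whose only base case is the label-size-two condition, while the prefix and the both-resets conditions are the two inductive steps. Throughout I read ``TA model'' as a deterministic TA $\ta$ whose maximal constant $K_\ta$ is at most $K$ and which is \emph{consistent} with the observations, i.e.\ $\ta$ accepts $w_t\in\Dom(\Obs)$ iff $\Obs(w_t)=\ltrue$; and I read ``$\ta$ has execution $w_{zr}$'' as: $\ta$ has a run $\rho$ whose $K$-closed word, after erasing locations, is $w_{zr}$ (so the resets of $\rho$ are the $r_i$ of $w_{zr}$). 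Assuming for contradiction that such $\ta$ and $\rho$ exist, the first thing I would isolate is a reusable sub-lemma: if $w_t$ is \emph{any} timed word compatible with $w_{zr}$, then the run of $\ta$ on $w_t$ coincides with $\rho$ up to $K$-equivalence (same locations, same resets $r_i$). This follows by induction on length, because compatibility pins the valuation of $w_t$ into the same $K$-class $z_i$ as $\rho$ at every step, $K$-equivalent valuations satisfy exactly the same guards of $\ta$ (the constants are $\le K$ and $\Guards$ has no diagonal constraints), and $\ta$ is deterministic, so the same transition, hence the same reset, is fired. In particular, by the uniqueness lemma on $K$-closed words, $w_t$ is accepted by $\ta$ iff $\rho$ is.

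For the base case $|\olabel(w_{zr})|=2$, I would invoke Coverage to obtain two observed words $w_t^1,w_t^2\in\words(w_{zr})\subseteq\Dom(\Obs)$, both compatible with $w_{zr}$, with $\Obs(w_t^1)=\ltrue$ and $\Obs(w_t^2)=\lfalse$. By the sub-lemma both receive the common acceptance status of $\rho$, so $\ta$ classifies them identically, contradicting consistency. The prefix case is immediate: if a prefix $w'_{zr}$ of $w_{zr}$ is invalid, then the matching prefix of $\rho$ is a run of $\ta$ with $K$-closed word $w'_{zr}$ (erasing locations), so $\ta$ has execution $w'_{zr}$, which the induction hypothesis (a strictly smaller derivation) forbids.

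The interesting inductive case is the both-resets one. Suppose there are $z_{n+1},a_{n+1}$ such that the two reset-extensions $w_{zr}.(z_{n+1},a_{n+1},\true)\dots$ and $w_{zr}.(z_{n+1},a_{n+1},\false)\dots$ are invalid; these are genuine observation states of $\tog$, so Coverage yields an observed word $w_t^{+}=u.(\delta,a_{n+1})\in\Dom(\Obs)$ whose prefix $u$ is compatible with $w_{zr}$ and whose final valuation lies in $z_{n+1}$. By the sub-lemma, the run of $\ta$ on $u$ realizes $w_{zr}$ (same resets $r_i$). Since $w_t^{+}\in\Dom(\Obs)$ and $\ta$ is consistent, $\ta$ reads the whole of $w_t^{+}$, so after $u$ it delays by $\delta$ into $z_{n+1}$ and fires an $a_{n+1}$-transition with some reset $r_\ta\in\{\true,\false\}$. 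Hence $\ta$ has execution $w_{zr}.(z_{n+1},a_{n+1},r_\ta)\dots$, which is \emph{one of the two invalid extensions}, contradicting the induction hypothesis --- and this is exactly why the condition requires \emph{both} reset choices to be invalid.

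The hard part, and the reason case 3 is phrased with an \emph{existing} continuation rather than an arbitrary time successor, is that $K$-equivalence captures only \emph{direct} indistinguishability and is \emph{not} preserved by elapsing time: two $K$-equivalent valuations of $z$ may reach different future $K$-classes by delay, because their per-clock offsets differ. Consequently one cannot simply prolong an arbitrary witness of $\rho$ forward into $z_{n+1}$, and the proof must instead borrow the concrete delay $\delta$ from an actually observed extending word and then use determinism to transport $\ta$'s execution of $w_{zr}$ onto that word's prefix $u$. Getting this interplay right --- Coverage supplying the observed continuation, and the determinism/guard-refinement sub-lemma guaranteeing that its prefix still realizes $w_{zr}$ in $\ta$ --- is where the real work lies; the remaining cases are essentially bookkeeping on top of the uniqueness lemma.
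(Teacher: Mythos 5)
Your overall architecture (well-founded induction on the invalidity derivation, with a determinism/$K$-equivalence sub-lemma) is sound, and it is actually more thorough than the paper's own proof, which only treats the base case $|\olabel(w_{zr})|=2$: there, by Coverage, two observed words of different status are compatible with $w_{zr}$, and since guards of $\ta$ (constants at most $K$, no diagonal constraints) cannot distinguish $K$-equivalent valuations, determinism forces both words to follow the assumed run $\rho$ into the same final location, which cannot be both accepting and non-accepting. Your sub-lemma is exactly the content the paper leaves implicit, and your base and prefix cases are correct.

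However, your both-resets case contains a genuine gap: the step ``since $w_t^{+}\in\Dom(\Obs)$ and $\ta$ is consistent, $\ta$ reads the whole of $w_t^{+}$'' is false as stated. Consistency forces $\ta$ to have a run on a word only when that word is \emph{accepted}; an observation with $\Obs(w_t^{+})=\lfalse$ may be rejected by $\ta$ precisely because no $a_{n+1}$-transition is enabled after the prefix $u$ --- the paper's TAs are not assumed complete or input-enabled. Coverage gives you \emph{some} observed word through $(z_{n+1},a_{n+1})$, but nothing guarantees it is accepting, so you cannot conclude that $\ta$ fires any $a_{n+1}$-transition there, and the contradiction with the induction hypothesis evaporates. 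Two repairs are available: (i)~reduce to complete TAs --- completing a deterministic TA with a non-accepting sink preserves determinism, the language, the bound on constants, and the execution $w_{zr}$, so without loss of generality every timed word has a run and your step becomes valid; or (ii)~unfold the invalidity derivations of the two extensions down to their base cases, which are observation states of $\tog$ with labels of size two; Coverage there yields an observed word with $\Obs=\ltrue$ extending through $(z_{n+1},a_{n+1})$, which $\ta$ must accept, hence must have a run on, and your sub-lemma then shows that this run realizes one of the two invalid extensions, restoring the contradiction. A secondary, related imprecision: the two extension states need not a priori be observation states of $\tog$ (their invalidity may stem from descendants), so Coverage does not directly apply to them; the same unfolding repairs this, since the set of states of $\tog$ is prefix-closed.
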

Situations may arise where a
guarded word with reset is not invalid but all its successors by a
given action~are; an example is presented below. In such
situations, two~different $K$-closed words with resets make the
successors invalid, and a guard has to be added.

\begin{example}
\label{ex:valid}
Consider the partial set of observations
\(\{((1.7,a)(1,a),\ltrue),\ \penalty1000\relax((1.7,a)\penalty1000\relax(1.1,a),\lfalse),\ 
\penalty1000\relax
((2.9,a)(1.1,a),\lfalse),\ \penalty1000\relax
((2.7,a)(1.1,a),\ltrue)\}\) over the alphabet
\({\Sigma=\{a\}}\). The corresponding partial timed observation graph
\(\tog\) is displayed in Fig.~\ref{fig:valid_tog}\footnote{In order to
  avoid overloading the explanation, we call the observation and graph
  partial because we do not mention some of the observations that would
  be necessary to have the implementation property.}.
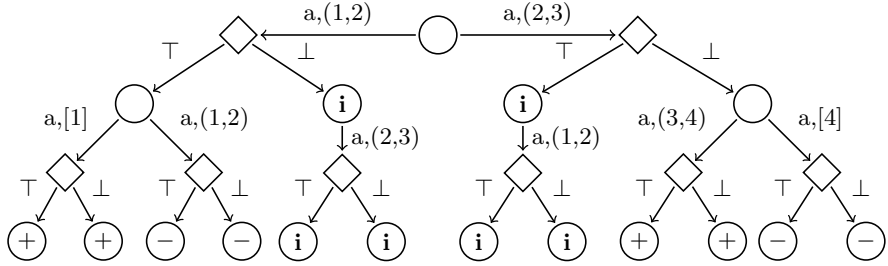
\begin{figure}[htp]
  \centering
  \newlength\picw
  \setlength\picw{9.7cm}
  \begin{tikzpicture}[->,shorten >=1pt,shorten <=1pt,auto,semithick, node distance = 1.2cm,scale=0.4]
    \node[so] (e) {};
    \node[sd] (12) [left= .22*\picw of e] {};
    \node[so] (12t) [ below left= 0.6 and .11*\picw of 12] {};
    \node[sd] (12t1) [ below left=0.6 and .0625*\picw of 12t] {};
    \node[so] (12t1t) [ below left=0.6 and .02*\picw of 12t1] {$\ltrue$};
    \node[so] (12t1f) [ below right=0.6 and .02*\picw of 12t1] {$\ltrue$};
    \node[sd] (12t12) [ below right= 0.6 and .0625*\picw of 12t] {};
    \node[so] (12t12t) [ below left=0.6 and .02*\picw of 12t12] {$\lfalse$};
    \node[so] (12t12f) [ below right=0.6 and .02*\picw of 12t12] {$\lfalse$};
    \node[so] (12f) [ below right= 0.6 and .11*\picw of 12] {\linv};
    \node[sd] (12f23) [below =0.4 of 12f] {};
    \node[so] (12f23t) [ below left= 0.6 and .028*\picw of 12f23] {\linv};
    \node[so] (12f23f) [ below right=0.6 and .028*\picw of 12f23] {\linv};
    \node[sd] (23) [right= .22*\picw of e] {};
    \node[so] (23t) [ below left= 0.6 and .125*\picw of 23] {\linv};
    \node[sd] (23t12) [below=0.4 of 23t] {};
    \node[so] (23t12t) [ below left=0.6 and .028*\picw of 23t12] {\linv};
    \node[so] (23t12f) [ below right=0.6 and .028*\picw of 23t12] {\linv};
    \node[so] (23f) [ below right=0.6 and .125*\picw of 23] {};
    \node[sd] (23f34) [ below left=0.6 and .0625*\picw of 23f] {};
    \node[so] (23f34t) [ below left=0.6 and .028*\picw of 23f34] {$\ltrue$};
    \node[so] (23f34f) [ below right=0.6 and .028*\picw of 23f34] {$\ltrue$};
    \node[sd] (23f4) [ below right=0.6 and .0625*\picw of 23f] {};
    \node[so] (23f4t) [ below left=0.6 and .028*\picw of 23f4] {$\lfalse$};
    \node[so] (23f4f) [ below right=0.6 and .028*\picw of 23f4] {$\lfalse$};
    
    \path (e) edge [->] node [above] {a,(1,2)} (12)
          (12) edge [->] node[above left] {$\true$} (12t)
          (12t) edge [->] node[above left] {a,[1]} (12t1)
          (12t1) edge [->] node[above left] {$\true$} (12t1t)
          (12t1) edge [->] node {$\false$} (12t1f)
          (12t) edge [->] node {a,(1,2)} (12t12)
          (12t12) edge [->] node[above left] {$\true$} (12t12t)
          (12t12) edge [->] node {$\false$} (12t12f)
          (12) edge [->] node {$\false$} (12f)
          (12f) edge [->] node {a,(2,3)} (12f23)
          (12f23) edge [->] node[above left] {$\true$} (12f23t)
          (12f23) edge [->] node {$\false$} (12f23f)
          (e) edge [->] node [above] {a,(2,3)} (23)
          (23) edge [->] node[above left] {$\true$} (23t)
          (23t) edge [->] node {a,(1,2)} (23t12)
          (23t12) edge [->] node[above left] {$\true$} (23t12t)
          (23t12) edge [->] node {$\false$} (23t12f)
          (23) edge [->] node {$\false$} (23f)
          (23f) edge [->] node[above left] {a,(3,4)} (23f34)
          (23f34) edge [->] node[above left] {$\true$} (23f34t)
          (23f34) edge [->] node {$\false$} (23f34f)
          (23f) edge [->] node {a,[4]} (23f4)
          (23f4) edge [->] node[above left] {$\true$} (23f4t)
          (23f4) edge [->] node {$\false$} (23f4f);
  \end{tikzpicture}
  \caption{A (partial) timed observation graph with some invalid nodes.}
  \label{fig:valid_tog}
\end{figure}
We~do not represent the actual $K$-equivalent classes on the graph so as to keep the figure as simple as possible. It can be seen that both resetting and not resetting the clock after the first action may sometimes lead to an invalidity. Hence, taking these observations into account in a timed decision graph with a \(\true\) guard on this transition leads to pruning both successors of a decision tree.  

This is problematic, as a decision state should always have 
successors. Hence it is necessary to introduce a guard to distinguish the 
different invalidities. 
\end{example}

\section{Updating a timed observation structure}
\label{sec:updates}

We define the algorithms used to update the previously defined data structures. 
The general idea is to add observations while preserving the good
properties of the data structures, which requires  detecting
inconsistencies and invalidities on-the-fly, and resolving them by adding
new guards.

The algorithms in Sec. \ref{sub:new_obs} handle new observations while keeping most of the good properties of the structures, except for consistency. When inconsistencies arise, calls are scheduled to the algorithms proposed in \ref{sub:incons}. Sec. \ref{sub:inval} deals with a similar but different problem arising from different invalidities meeting each others. Finally an algorithm to rebuild (parts of) the structure using the informations gathered using the previous section algorithms is described in Sec. \ref{sub:rebuild}.
\subsection{Adding a new observation}
\label{sub:new_obs}
%
In~essence, our algorithms propagate new words in the TDG~\(\tdg\),
using satisfiability between guarded words with resets and timed words
to guide the descent in the tree.
When new states have to be created, membership
queries are launched to get a label for them.
All of this is complemented by a similar work on the TOG~\(\tog\), in
order to take into account all the new observations. The~main
difference between the two algorithms is that in the~TDG, labels of
size~$2$ are detected and left for a future handling as the procedure
to identify guards is potentially heavy, while in the~TOG, invalidity
leads to immediate pruning in order to limit the size of the structures.

We use the functions \(\findpath_{\tdg}\)
\iflong (Algorithm~\ref{findpath_tdg})\else\fi and \(\findpath_{\tog}\)
\iflong (Algorithm~\ref{findpath_obs})\else\fi to propagate new observations in the
existing structures. Subsequent creation of new nodes is made with the
functions \(\addword_{\tdg}\) \iflong (Algorithm~\ref{addword_tdg}) \else\fi 
and \(\addword_{\tog}\) \iflong(Algorithm~\ref{addword_tog})\else\fi. 
Membership queries and the resulting function calls are handled by the 
\(\request\)
function \iflong(Algorithm~\ref{request})\else\fi, and the effective pruning is made
in \(\searchprune\) \iflong(Algorithm~\ref{searchprune})\else\fi. 

The \(\findpath_{\tdg/\tog}\) algorithms execute the descent through the existing
structures, while the \(\addword_{\tdg/\tog}\) ones extend the structures, and make calls to
\request. The latter algorithm first checks if a fitting observation already
exists before making a membership query if necessary.  The \searchprune
procedure follows the lines of the definition of invalidity and finds
the root of the invalid subtree before pruning~it.



The following three statements express soundness of our algorithms. 
They ensure that the good properties of the structures are invariant by the call to the \(\findpath\) algorithms.
Property \ref{pr:findpath_tdg_sound} states that \(\findpath_{\tdg}\) keeps the good properties of \(\tdg\), except consistency, that is handled by in later.
Property \ref{pr:findpath_tog_sound} does the same for \(\findpath_{\tog}\) and \(\tog\), while 
property \ref{pr:searchprune_sound} ensures that the calls made to \(\searchprune\) during the execution of the \(\findpath\) algorithm prunes exactly the invalid words.
\begin{restatable}{proposition}{fptdgsound}
    \label{pr:findpath_tdg_sound}
  Starting from a complete observation structure \((\tdg,\Obs)\) such
  that \(|\Obs(w_{gr})|\geq1\) for all \(w_{gr}\in\tdg\), and a new
  word~\(w_t\) associated with an observation \(o\), a~call to
  \(\findpath_\tdg(w_t,o,\epsilon,\initv,s_0)\) terminates and
  modifies the observation structure in such a way that
  it is complete,
  \(w_t\in \Dom(\Obs)\),
  \({\Obs(w_t)=o}\),  and  \({|\Obs(w_{gr})|\geq1}\)
  for all \(w_{gr}\in\tdg\).
\end{restatable}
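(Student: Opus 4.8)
The plan is to establish the four conclusions --- termination, completeness, membership of $w_t$ in $\Dom(\Obs)$ with value $o$, and non-emptiness of every label --- \emph{simultaneously}, by induction on the length of the suffix of $w_t$ still to be processed, following the recursive structure of $\findpath_\tdg$. The inductive engine is an invariant attached to every recursive call $\findpath_\tdg(w''_t,o,w'_{gr},v,\stl)$: the argument $\stl=w'_{gr}.z$ is a language state of the \emph{current} TDG, the word splits as $w_t=w'_t.w''_t$ with $w'_t$ the prefix already consumed, $w'_t\models w'_{gr}$, and $v$ is exactly the valuation reached after reading $w'_t$ along the resets prescribed by $w'_{gr}$, so that $v\models z$. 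This invariant holds at the top-level call $\findpath_\tdg(w_t,o,\epsilon,\initv,s_0)$, since $s_0=(\epsilon,\{\initv\}\future)$, the empty prefix satisfies $\epsilon$, and $v=\initv$.

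In the base case $w''_t=\epsilon$ we have $w_t=w'_t\models w'_{gr}$, so $\stl$ is a leaf matching $w_t$; the algorithm calls $\request$, which records $\Obs(w_t)=o$ (re-using the existing entry when $w_t$ is already known --- consistently valued, since each word is answered once --- or issuing a fresh membership query otherwise). Since $\stlabel$ is \emph{derived} from $\Dom(\Obs)$ and $w_t\models w'_{gr}$, we immediately get $o\in\stlabel(\stl)$, hence $|\stlabel(\stl)|\ge1$ (possibly $=2$, which is allowed here: splitting such size-$2$ labels is the job of the later consistency step). In the inductive step $w''_t=(t,a).w'''_t$, set $v'=v+t$. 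If decision states for $(\stl,a)$ already exist, the partition property $\bigvee_{g\in I}g\equiv\true$ with pairwise-disjoint guards provides a \emph{unique} $g\in I$ with $v'\models g$; from $\std=(\stl,a,g)$ the algorithm recurses into \emph{both} reset children, with valuations $v'_{[x_a\leftarrow0]}$ and $v'$ and remaining word $w'''_t$. Each recursive call preserves the invariant --- the consumed prefix becomes $w'_t.(t,a)$, which satisfies $w'_{gr}.(g,a,\true)$, resp.\ $w'_{gr}.(g,a,\false)$, and the threaded valuation is the one genuinely reached --- so the induction hypothesis makes both child subtrees complete while recording $w_t$ and keeping labels non-empty; combining them yields the same for the subtree at $\stl$. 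If no decision state for $(\stl,a)$ exists, $\findpath_\tdg$ defers to $\addword_\tdg$, and I invoke its matching specification (creation of the decision state with guard $\true$, of its two reset children, and labelling of the new leaves through $\request$).

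Two global observations close the argument. For \emph{completeness} it suffices that, after the call, every language state $\stl=w_{gr}.z$ reachable from $s_0$ with $w_t\models w_{gr}$ actually exists; this is exactly what the descent ensures, because at each decision state some guard is always met (the guards cover $\true$) and both reset branches are followed, so the set of visited language states is precisely the set of $w_{gr}$ satisfied by $w_t$, missing ones being created by $\addword_\tdg$. The required label inclusion $\Obs(w_t)\in\stlabel(\stl)$ then holds automatically, and the same derivation shows completeness is preserved for the \emph{old} observations: the call only adds states and extends $\Dom(\Obs)$, never deleting, so existing matching paths survive and any already-present $w$ satisfies $\Obs(w)\in\stlabel(\stl')$ for every (possibly new) state $\stl'$ it matches, by definition of $\stlabel$. \emph{Termination} is immediate, since each recursive call strictly decreases $|w''_t|$, the branching at every level is at most two (the two resets), and $\addword_\tdg$ itself terminates on the finite remaining suffix.

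The step I expect to be the main obstacle is the valuation-tracking half of the invariant: proving that the valuation $v$ threaded through the recursion coincides with the valuation of the timed word with resets that reaches $\stl$, which demands careful bookkeeping of time elapse and of the single-clock resets along $w'_{gr}$, combined with the disjoint-cover property of decision states to argue that the descent is deterministic in its choice of guard yet genuinely branches on both resets --- so that exactly the $w_t$-compatible states, and no others, are touched. A secondary, mostly organisational, difficulty is to factor the reasoning cleanly through the separately-specified $\addword_\tdg$ and $\request$, so that the induction need not re-derive their guarantees.
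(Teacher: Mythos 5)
Your proof is correct and follows essentially the same route as the paper's: a structural induction on the recursive calls to \(\findpath_\tdg\) and \(\addword_\tdg\), using the disjoint-cover property of guards to make the descent unique per decision state while branching on both resets, deferring new-state creation to \(\addword_\tdg\)/\(\request\), and getting termination from the strictly decreasing suffix length. The only difference is presentational: you make explicit the prefix/valuation-tracking invariant and the fact that \(\stlabel\) is derived from \(\Dom(\Obs)\) (so label inclusion is automatic), both of which the paper's proof uses implicitly.
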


\begin{restatable}{proposition}{fptogsound}
    \label{pr:findpath_tog_sound}
  Starting from a timed observation graph \(\tog\) implementing an
  observation function \(\Obs\), and a new timed word \(w_t\)
  associated with the observation~\(o\), a~call to
  \(\findpath_{\tog}(w_t,o,\epsilon,\epsilon,\initv,\steps)\)
  terminates and modifies the timed observation graph in such a way that it
  implements the valid part of \(\Obs\) extended to~\(w_t\).
\end{restatable}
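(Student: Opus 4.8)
The plan is to argue by induction on the recursive structure of \(\findpath_{\tog}\), which mirrors the descent along the prefixes of \(w_t\) together with the branching on the two possible reset decisions at each step. The induction hypothesis will be the implementation invariant itself, stated relative to the \emph{valid part}: after processing a prefix of \(w_t\) at an observation state \(\sto\), the subtree rooted at \(\sto\) implements the restriction of \(\Obs\cup\{(w_t,o)\}\) to the timed words whose compatible \(K\)-closed words with resets reachable through \(\sto\) are valid. Before the inductive argument I would first dispatch termination: since a TOG is a tree (as noted after its definition) and \(w_t\) has finite length, the descent consumes one letter per level, and at each state it branches into at most the two reset choices \(\true\) and \(\false\); hence the recursion tree has depth bounded by \(|w_t|\) and finite branching, and each leaf triggers at most one \(\request\) and one \(\searchprune\) call, both acting on finite subtrees. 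This yields termination.

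For the descent and extension step I would show that the two conditions of implementation are re-established locally. Guided by satisfiability \(w_t\models w_{gr}\), the algorithm follows the unique existing path matching the current prefix and reset choice, uniqueness holding because the guards out of a decision state partition \(\true\) into pairwise-disjoint \(K\)-equivalence classes. When no matching successor exists, \(\addword_{\tog}\) creates the corresponding observation state, whose zone is the unique \(K\)-closed class containing the reached valuation (by the earlier lemma on unique \(K\)-closed words), and \(\request\) supplies its label, either reusing an existing observation from \(\words\) or issuing a membership query. I would check that in both cases \(w_t\) is inserted into \(\words(\sto)\) and \(o\) into \(\olabel(\sto)\) at every state \(\sto\) reached along a compatible reset word, which re-establishes \textbf{Correspondence}; and that every freshly created state has nonempty \(\words\) whose words satisfy its zone word and carry the label recorded in \(\olabel\), which re-establishes \textbf{Coverage}. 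The previously implemented observations are untouched on the branches that are not revisited, so their part of the invariant is preserved verbatim.

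The crux is the pruning. Inserting \(o\) at a state \(\sto\) can either raise \(|\olabel(\sto)|\) to~\(2\), or, after both reset branches below a decision state have been explored, make both of them invalid; in either case the node becomes invalid in the precise sense of the definition. Here I would show that \(\searchprune\), which climbs to the root of the maximal invalid subtree and deletes it, removes \emph{exactly} the newly invalid part: soundness, that only invalid words are removed, follows by matching \(\searchprune\)'s stopping conditions to the three clauses of the definition of invalidity; completeness, that no invalid word survives, follows because invalidity can only arise along the branches touched by \(w_t\), so any node made invalid lies in the subtree from which the climb starts. Combining this with Proposition~\ref{pr:sound}, which certifies that the excised words correspond to no TA model, the surviving structure implements precisely \(\Obs\cup\{(w_t,o)\}\) restricted to its valid part, as claimed.

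The main obstacle will be this last paragraph: because invalidity is defined recursively and its third clause requires \emph{both} reset-successors to be invalid, the pruning decision at an internal node cannot be taken until both branches below it have been processed. I therefore expect the delicate point to be ordering the recursion so that \(\searchprune\) is invoked only once both subtrees are settled, and proving that this ordering detects every node that the recursive definition declares invalid while never deleting a state that still covers a valid reset combination of some observation.
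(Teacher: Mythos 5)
Your proposal is correct and follows essentially the same route as the paper: termination from the structure of the recursive calls, then an induction on the calls to \(\findpath_{\tog}\) and \(\addword_{\tog}\) that re-establishes Correspondence and Coverage, with the invalid part excluded because, by definition, any state below a label of cardinality two is itself invalid. The only organizational difference is that you inline the soundness and completeness of the pruning, which the paper factors out into Proposition~\ref{pr:searchprune_sound}; the ordering worry you flag at the end is resolved by the algorithm itself, since each invalidity detection re-checks the ``all successors invalid'' condition while climbing, so whichever reset branch becomes invalid last triggers the full propagation.
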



\begin{restatable}{proposition}{spsound}
    \label{pr:searchprune_sound}
  Starting from an observation structure \((\tdg,\Obs)\) where
  \(\Obs\) is implemented by \(\tog\) and no invalid states can be
  reached in \(\tdg\), calling \(\findpath_\tdg\) or
  \(\findpath_{\tog}\) modifies \(\tog\) and~\(\tdg\) in such a way that no
  invalid states can be reached in~\(\tdg\). Furthermore, no~valid words
  are made unreachable.
\end{restatable}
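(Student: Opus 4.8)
The plan is to prove the two assertions---(i)~no invalid state remains reachable in $\tdg$, and (ii)~no valid word is made unreachable---as an \emph{invariant-preservation} argument, treating $\findpath_\tdg$ and $\findpath_\tog$ uniformly, since both react to a single newly added observation and both may trigger calls to $\searchprune$. The starting hypothesis is that $\Obs$ is implemented by $\tog$ and that no invalid state is reachable in $\tdg$. The first step is to pin down the \emph{only} way invalidity can be created: by the inductive definition, a $K$-closed word that was valid can turn invalid solely because (a)~its own label grew to size~$2$, (b)~a prefix became invalid, or (c)~for some next class and action both its reset-successors became invalid. Since adding $w_t$ only affects, by Proposition~\ref{pr:findpath_tog_sound}, the finitely many states $\sto=w_{zr}$ touched along the paths with $w_{tr}\models w_{zr}$, the base case~(a) can appear only there; cases~(b) and~(c) are pure propagation from already-invalid nodes. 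As we start from a structure with no invalid state, an easy induction on subtree structure shows that every new invalidity is rooted, via~(b) and~(c), in one of these newly created size-$2$ labels.

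For assertion~(i), I would argue that $\searchprune$ realizes exactly this propagation in reverse. When $\findpath$ detects a size-$2$ label---the only new base case---it calls $\searchprune$, which follows the definition of invalidity upward, from the witness node to the topmost ancestor that is still invalid, and prunes the subtree rooted there from $\tdg$. Because $\tog$ and $\tdg$ are trees and the propagation of invalidity strictly moves toward the root, this upward search terminates; repeating it for each new size-$2$ label removes every node whose invalidity status changed. Combined with the completeness and implementation guarantees of Propositions~\ref{pr:findpath_tdg_sound} and~\ref{pr:findpath_tog_sound}, which ensure the surviving structure still accounts for $\Obs$, we conclude that after the call no reachable state of $\tdg$ is invalid.

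Assertion~(ii) hinges on the \emph{prefix clause} of invalidity. The node $r$ at which $\searchprune$ prunes is, by construction, itself invalid; hence every word extending $r$ has an invalid prefix and is therefore invalid as well. Thus the whole subtree rooted at $r$ consists exclusively of invalid words, so pruning it cannot remove any valid word. Proposition~\ref{pr:sound} then certifies that discarding these words loses no behaviour realizable by a TA, so the pruned structure still supports every model compatible with the observations.

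The main obstacle is the situation of Example~\ref{ex:valid}, where a decision state carrying a $\true$ guard has \emph{both} reset-successors invalid but for \emph{different} $K$-classes of the next action. A naive reading would make the parent language state invalid and prune it, even though a guard refinement keeps a valid reset on each piece---which would violate assertion~(ii). The crux of the argument is therefore that invalidity, being defined over $K$-closed words, only declares the parent invalid when a \emph{single} next class $z_{n+1},a_{n+1}$ makes both successors invalid; when two distinct classes are responsible, the parent is \emph{not} invalid, so $\searchprune$ correctly refrains from pruning it and the needed guard insertion is deferred to the procedure of Section~\ref{sub:inval}. Verifying that $\searchprune$ respects this distinction---pruning only genuinely invalid subtrees and leaving guard-refinable nodes in place---is the delicate point on which both assertions ultimately rest.
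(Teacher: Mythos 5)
Your proposal is correct and takes essentially the same route as the paper's proof: new invalidities can only arise as freshly created size-two labels detected when an observation is inserted, the invalid tag is propagated upward following the definition, \(\searchprune\) prunes exactly the root of the detected invalid subtree, and the prefix clause of invalidity guarantees that every pruned descendant is itself invalid, so no valid word is lost. Your closing paragraph on Example~\ref{ex:valid} (the parent stays valid when two \emph{distinct} \(K\)-classes cause the successors' invalidities, so the algorithm schedules a guard insertion rather than pruning) is actually more explicit than the paper's own proof, which merely asserts that \(\searchprune\) ``targets exactly the root of the invalid subtree''; the only slight inaccuracy is architectural---the upward propagation of the invalid tag is performed inside \(\findpath_{\tog}\) itself, and \(\searchprune\) then replays the resulting path in \(\tdg\) to carry out the pruning---but this does not affect the argument.
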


\subsection{Dealing with inconsistency}
\label{sub:incons}
An inconsistency arises when a language state of the TDG contains both
accepting and non-accepting observations.  It~means that a guard
must be added somewhere in the structure in order
to distinguish between these observations.  


For this we search for a pair of \emph{adjacent} words, which
intuitively identify the boundary between accepting and non-accepting
behaviours.
We~then build a finite set of \emph{differences} between
adjacent words, each of which corresponds to a possible guard.  This procedure
is described in the \(\adjpair\) algorithm.

\smallskip
We use $K$-equivalence to define the notion of adjacency. Intuitively
adjacent words have the same projection on actions and
resets, and their valuations either are $K$-equivalent, or they materialize a
boundary between the accepted and non-accepted words.
\begin{definition}
  For two timed words with resets \(w_{tr}=(v_i\xrightarrow{t_i,a_i,r_i}v_{i+1})_{i\in[0,n]}\)
  and \(w'_{tr}=(v'_i\xrightarrow{t'_i,a_i,r_i}v'_{i+1})_{i\in[0,n]}\), we say that 
  \emph{\(w_{tr}\) is adjacent to \(w'_{tr}\)} when for all \(i\in[0,n]\) and 
  \(x_a\in\Clocks\): 
  \begin{itemize}
  \item if \(v_i(x_a)+t_i\in\bbN\) then
    \(|(v_i(x_a)+t_i) - (v'_i(x_a)+t'_i) | <1\), 
    \item otherwise, \(\keq{v_i(x_a)+t_i}{v'_i(x_a)+t'_i}\).
  \end{itemize}
\end{definition}

Notice that adjacency is not a symmetric relation.  We will sometimes
abuse the notations and say
that a pair
\(w,w'\) is adjacent to mean that \(w\) is adjacent to~\(w'\).  
We~use adjacency to identify \emph{differences} between the words as
possible new guards that resolve the inconsistency.
\begin{definition}
  The \emph{difference} between two words \(w_{tr}=(v_i\xrightarrow{t_i,a_i,r_i}v_{i+1})_{i\in[0,n]}\) adjacent to \(w'_{tr}=(v'_i\xrightarrow{t'_i,a_i,r_i}v'_{i+1})_{i\in[0,n]}\), noted 
  \(\diff(w_{tr},w'_{tr})\) is the set of quadruples defined as: if for a clock $x$, \(v_i(x)+t_i=k\in\bbN\), then if \(v'_i(x_a)+t'_i<k\), \((i,x,k,\geq)\in\diff(w_{tr},w'_{tr})\) and if \(v'_i(x_a)+t'_i>k\), \((i,x,k,\leq)\in\diff(w_{tr},w'_{tr})\).
\end{definition}
Using these definitions, we can derive from two adjacent words a set of
candidates to make a new guard.
\adjpair makes membership queries on linear combinations of the two initial observations to perform a binary search until the clock values of the pair have less than $1$ time unit of distance.  
Then it forces every non-$K$-equivalent pair of clock values to
have one of its elements be an integer with more linear combinations. 
Finally, in order to ensure
that only one of the two words have such integer distinctions, it
compares them with their mean. This gives an adjacent pair.

\begin{proposition}
  The \adjpair algorithm constructs an adjacent pair using at most 
  \(\calO(m|\Sigma|log(K))\) membership queries. 
\end{proposition}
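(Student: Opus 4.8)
The plan is to analyse the three phases of \adjpair separately, with Proposition~\ref{pr:clock_comb} as the structural backbone. That proposition shows that along any \(\lambda\)-weighted sum of the two fixed observations, for a fixed reset word \((r_i)\), every reached clock value \(v^{3}_{i,r}(x_a)\) is an affine function of \(\lambda\); hence so is every guard-testable quantity \(g_{i,a}(\lambda)\), namely a clock value augmented by the delay \(t_i\) read at step \(i\). Two consequences drive the argument: (i) bisecting the current \(\lambda\)-interval halves the span of \emph{every} coordinate \((i,x_a)\) simultaneously, and (ii) for any target value there is a single \(\lambda\) attaining it in a given coordinate. Since clock values above the maximal constant \(K\) are all \(K\)-equivalent, I would clamp each coordinate to \([0,K]\), so every coordinate starts with span at most \(K\).

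First I would treat the binary-search phase. The two initial words have opposite acceptance status, so the segment contains a boundary where the target's verdict flips. Querying the midpoint word reveals which half still straddles the boundary; recursing there preserves the invariant that the endpoints have opposite status, and by (i) each query halves all coordinate spans at once. After \(\lceil\log_2 K\rceil\) queries every span is \(<1\), so this phase costs \(\calO(\log K)\) queries and yields a straddling pair whose coordinates are, one by one, either \(K\)-equivalent or within one time unit.

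Next I would bound the integrality-forcing phase. There are at most \(m\cdot|\Sigma|\) coordinates \((i,x_a)\), and I would process each non-\(K\)-equivalent one in turn: its span being below one, it straddles a unique integer \(k\), and by (ii) a single point of the current segment sets this coordinate to \(k\). A binary search of depth \(\calO(\log K)\), steered by membership queries that localise the acceptance flip, drives one endpoint onto \(k\). Summing over the \(\calO(m|\Sigma|)\) coordinates gives \(\calO(m|\Sigma|\log K)\) queries, dominating the first phase; the final comparison of the pair with its mean, which breaks the asymmetry of adjacency so that all integer distinctions land on one side, adds only \(\calO(1)\). The total is thus \(\calO(m|\Sigma|\log K)\). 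Correctness follows by reading off the adjacency definition coordinatewise: the \(K\)-equivalent coordinates satisfy the second clause untouched, while each coordinate forced to \(k\) matches the first clause, with an integer value in one word and the other within distance one.

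The hard part will be the integrality-forcing phase, because a linear combination moves all coordinates together, so I must show that pushing one coordinate onto its integer \(k\) does not disturb coordinates already made integral or \(K\)-equivalent. The key estimate is that, once all spans are below one, any move within the current segment keeps every other coordinate inside the same open unit interval---hence inside its \(K\)-equivalence class---so earlier integralities and equivalences are preserved. I would also have to check that detecting an integer crossing through acceptance flips is sound in this setting, relying on the learner's ability to compute the valuations exactly from the known delays and the fixed reset word. Establishing this non-interference invariant, rather than the query counting, is where the real work lies.
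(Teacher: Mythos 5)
Your proof attempt is necessarily a different route from the paper's, because the paper does not actually prove this proposition: it defers entirely to the proof of Theorem~5.8 of~\cite{GJP06}. Your self-contained three-phase analysis, resting on Proposition~\ref{pr:clock_comb} (all clock values are affine in \(\lambda\) for a fixed reset word), has the right skeleton, but it contains one genuine error and one acknowledged gap. The error is in the integrality-forcing phase: \adjpair does not run a binary search per coordinate. Since the learner knows both valuations and the delays exactly, it computes \emph{in closed form} the unique \(\lambda\) at which coordinate \((i,x_a)\) equals the straddled integer \(k\), spends exactly one membership query on that word, and uses the answer to decide which endpoint it replaces. A bisection ``steered by acceptance flips,'' as you describe, converges towards the acceptance boundary --- which is a different point --- and in any case can never place a coordinate \emph{exactly} on \(k\) in finitely many steps. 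Exactness is indispensable here: the first clause of adjacency and the extraction of \(\diff\) (hence of consistency guards) require coordinates that are true integers. So the procedure you analyze would not construct an adjacent pair; the actual algorithm costs \(\calO(m|\Sigma|)\) queries in this phase, which your \(\calO(m|\Sigma|\log K)\) over-estimate happens to dominate, so the stated bound survives by accident.

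The second issue is the non-interference invariant that you yourself flag as ``where the real work lies'' and leave unestablished --- and as stated it is false. Later moves do \emph{not} preserve earlier integralities: when the endpoint whose coordinate sits exactly at \(k\) is itself replaced by an interior point of the segment, that coordinate leaves \(k\). What is true, and what the proof needs, is that every processed coordinate remains \emph{adjacency-compatible}: it is either in state ``integer at one endpoint, other endpoint within distance \(<1\)'' or in state ``both endpoints in the same open unit interval (or both \(>K\))'', and both states are stable under replacing an endpoint by an interior point --- replacing the integer endpoint merely degrades the first state into the second, which is \(K\)-equivalence and still satisfies adjacency. This corrected invariant, combined with the final comparison against the mean (which, as you correctly observe, pushes all residual integer values into a single word of the returned pair, matching the asymmetry of the adjacency definition), is what closes the correctness argument.
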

\begin{proof}
  We refer the reader to the proof of Theorem 5.8 in~\cite{GJP06}.
\end{proof}

\subsection{Dealing with invalidity}
\label{sub:inval}
A label of size two in the TOG indicates an invalidity. 
It~points to a combination of resets being impossible combined with
those precise observations.  Invalidity is simply dealt with by
pruning the invalid parts of the TDG and TOG. But a challenge can
arise, as explained in Example~\ref{ex:valid}: sometimes \emph{all}
successors of a decision state of the TDG following a \emph{valid}
language state are pruned, due to invalidities. In~this case, a~guard
must be introduced to separate the different invalidities and allow to
rebuild the graph accordingly. As~for inconsistencies, it~is important
to introduce guards that model as closely as possible the changes in
behaviours of the observation.

For this purpose, we~again use a binary search, but this time
manipulating a pair of \emph{sets} of words. Furthermore, as the
invalidities are often detected by the precise combination of
\emph{fractional values}, the~delays in the words are only modified by
\emph{integer values}. For two timed words
\(w_t^i=(t_j^i,a_j)_{j\in[1,n_i]}\) with \(n_1\leq n_2\), we define
the operator \(w_t^1\op
w_t^2=(\floor{t_j^1}+\fracpart{t_j^2},a_j)_{j\in[1,n_1]}.
(t_j^2,a_j)_{j\in[n_1+1,n_2]}\)
to describe the operation used in the algorithm
(where $\floor t$ and $\fracpart t$ respectively represent
the integral and fractional parts of~$t$).

Of course, it is impossible to obtain a good precision while keeping
all fractional values: clock values can not be modified to become
integers. For this reason our algorithm only identifies a set of
integer constants separating two behaviours, but does not find which
behaviour the constants belong~to.
This means that we have to wait for a counterexample from an
equivalence query to correct 
the possible wrong guesses we made.

Procedure \invalguard \iflong is described in Algorithm~\ref{invalguard}.
It~\else\fi outputs a \emph{validity guard} \((\stl,a,g,x,\mathord\sim,k)\)
where $a\in\Sigma$, $g$ is a guard, $x$ a clock, \(k\in\bbN\) and
\(\mathord\sim\in\{\mathord<,\mathord\leq,\top\}\). Such validity
guard states that in the language state~\(\stl\), after
playing~$a$ with guard~$g$, adding $x\sim k$ to the guard separates the two causes
of invalidity.
We use \(\top\) to denote that both strict and large
inequalities could fit the current observations.
The \invalguard algorithm conducts a binary search between two sets of
timed words, while keeping the fractional part of the clock values
unchanged thanks to the \(\op\) operator, while the $K$-closed sets
corresponding to the sets of words do not touch each other.

\begin{proposition}
  Algorithm \invalguard terminates after \(\calO(m(|W_1|+|W_2|)\cdot |\Sigma|\cdot \log(K))\) membership queries,
  where $m$ is the size of a largest word in \(W_1\cup W_2\).
\end{proposition}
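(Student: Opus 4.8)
The plan is to follow the complexity analysis of \adjpair (Theorem~5.8 in~\cite{GJP06}) and to track the extra factor introduced by manipulating \emph{sets} of words rather than a single pair. First I would isolate the loop structure of \invalguard: it is a binary search that, at each step, recombines the words of $W_1$ and $W_2$ through the operator $\op$ and submits the results to membership queries to decide on which side of the current candidate constant each word lies. Because $\op$ copies the integral part of its first operand and the fractional part of its second, every word produced during the search keeps its fractional coordinates fixed; the search therefore only moves integral parts of clock values, and these are all bounded by the maximal constant $K$ (any value exceeding $K$ being $K$-equivalent to every other such value). This bounds each search interval, and hence the number of halving steps, by $\calO(\log K)$.

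Next I would bound the number of binary searches. Exactly as in the \adjpair analysis, a separating integer must be pinned down independently at each pair $(i,x_a)$ of a position $i\in[0,m]$ and a clock $x_a$ at which the two behaviours are not yet $K$-equivalent; there are at most $m$ positions and $|\Clocks|=|\Sigma|$ clocks, hence at most $m\cdot|\Sigma|$ such coordinates, each handled by its own search of depth $\calO(\log K)$. For termination I would argue that every iteration strictly halves the integer interval still separating the two behaviours on the coordinate under consideration, and that the loop guard --- the two $K$-closed sets attached to $W_1$ and $W_2$ ceasing to touch --- is necessarily met once that interval has unit width, since a single integer then lies strictly between the two classes.

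The ingredient new to \invalguard is that each membership-query point evaluates the current combination on \emph{every} word of $W_1\cup W_2$ instead of a single word, replacing one query by $|W_1|+|W_2|$ queries. Multiplying the three independent factors --- $m\cdot|\Sigma|$ coordinate searches, depth $\calO(\log K)$ each, and $|W_1|+|W_2|$ queries per query point --- yields the claimed bound $\calO\bigl(m(|W_1|+|W_2|)\cdot|\Sigma|\cdot\log(K)\bigr)$. The hard part will be justifying that lifting to sets does not inflate the search \emph{depth}: I must show that a single scalar integer offset can separate all words of $W_1$ from all words of $W_2$ simultaneously. Here I would invoke Proposition~\ref{pr:clock_comb}, by which each clock coordinate of a reached valuation is an affine function of the delays; the combinations produced by $\op$ therefore vary monotonically and linearly in the offset, so the $K$-closed classes of the whole sets move coherently and a binary search on a single scalar remains sound for the entire batch. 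A minor point to check is that $\op$ is well-defined when the words of $W_1$ and $W_2$ have different lengths, but this only concerns the verbatim-copied suffix and leaves the query count unchanged.
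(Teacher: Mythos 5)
Your proposal is correct and takes essentially the same approach as the paper: the paper's entire proof is a one-line remark that the argument is the same as for \adjpair, i.e., Theorem~5.8 of~\cite{GJP06}, with the factor \((|W_1|+|W_2|)\) accounting for querying whole sets of words rather than a single pair --- exactly the decomposition you give (coordinate-wise searches of depth \(\calO(\log K)\), times the batch of queries per step). Your write-up merely fills in the details the paper delegates to the reference, such as the role of \(\op\) in freezing fractional parts and the soundness of running a single binary search for the whole sets.
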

\begin{proof}
  The proof uses the same arguments as the one of \adjpair. 
\end{proof}

\subsection{Rebuilding the graph}
\label{sub:rebuild}
To rebuild a subtree is to introduce new guards using adjacent pairs
and validity guards only when necessary, and re-propagate the
informations in the new guarded words with resets they satisfy. We use
Algorithm \(\rebuild\) for this.
From an adjacent pair, we extract \emph{consistency guards}, which will
be used to reconstruct a decision graph that is consistent with respect to the
adjacent pair.
\begin{definition}
For an adjacent pair \(w_{tr},w'_{tr}\), clock constraint \(x_a\leq
k\) is a \emph{consistency guard} at depth~$i$ if
\((i,x_a,k,\prec)\in\diff(w_{tr},w'_{tr})\) and there is no
\((j,x_a,l,\prec')\in\diff(w_{tr},w'_{tr})\) such that \(j<i\) or
\(j=i\) and \(l<k\).
\end{definition}
The consistency guards are taken on the first difference,
so as to ensure
that they can not be overwritten later (there are no guards that can
separate the pair before the guard), and
to avoid large
constants as much as possible.

Notice that we can not always infer a unique guard from an adjacent pair, as multiple clocks can be different at the same time.
Intuitively, \rebuild only introduces guards "when needed", which is formalized by the following \emph{well-guardedness} property. 
\begin{definition}
  \looseness=-1
  A timed decision graph is said \emph{well guarded} if, for all
  transitions \((\stl,a,g,\std)\in\Trans_{\Sigma}\) and  all
  constraints \(x_b\prec k\) in \(g\), 
  either there is \(w_{tr}\) adjacent to \(w'_{tr}\) such that both pass by 
  \(\stl\) and \(x_b\prec k\) is a consistency guard for the pair at this depth
  \emph{or} \((\stl,a,g',x_b,\sim,k)\) is a validity guard with
  $g\subset g'$ and
  $\prec$ is either $\sim$ or~$\neg\sim$.
\end{definition}
\rebuild constructs a complete, consistent and well-guarded subtree if it is called high enough in the tree. 
\begin{restatable}{proposition}{rebuildsound}
  Running \rebuild on a valid and consistent state~\(\stl\) of which no
  successors have inconsistencies that lead to consistency guards at a
  depth lesser than \(|\stl|\), constructs a subtree rooted in its
  argument that is \emph{complete}, \emph{consistent} and
  \emph{well-guarded}. It furthermore does not have \emph{invalid}
  states.
\end{restatable}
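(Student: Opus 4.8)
The plan is to prove the statement by structural induction on the subtree that \rebuild constructs, establishing the four target properties (completeness, consistency, well-guardedness, and absence of invalid states) as joint invariants that are preserved as the construction descends the tree. Since \rebuild introduces guards "only when needed" from adjacent pairs and validity guards, the argument naturally decomposes according to the decisions \rebuild makes at each language state: either it detects no inconsistency/invalidity requiring a split (and copies the existing structure), or it extracts a consistency guard from an adjacent pair (via \adjpair), or it uses a validity guard (via \invalguard) to separate two causes of invalidity. I would set up the induction so that the hypothesis on the root~\(\stl\)---namely that it is valid, consistent, and no successor has an inconsistency forcing a consistency guard at depth less than~\(|\stl|\)---propagates correctly to each child state created during rebuilding.

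\textbf{Base case and completeness.} First I would handle completeness, which is essentially a bookkeeping property: \rebuild re-propagates all relevant observations into the newly created guarded words with resets. I would argue that for every \(w_t \in \Dom(\Obs)\) passing through the root, the descent in the rebuilt subtree routes \(w_t\) to some language state whose label records \(\Obs(w_t)\); this follows because the guards introduced at each decision state partition \(\true\) (by the TDG definition, \(\bigvee_{g \in I} g \equiv \true\) and the guards are pairwise disjoint), so every timed word satisfies exactly one outgoing branch. Completeness is then inherited from Proposition~\ref{pr:findpath_tdg_sound}, since re-propagation is carried out by the same \findpath mechanism.

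\textbf{Consistency and the inductive step.} The heart of the argument is consistency: I must show each language state in the rebuilt subtree has a label of size exactly one. The precondition guarantees the root is consistent; for deeper states, consistency is exactly what the introduced guards are designed to restore. When an inconsistency is detected at a decision state, \adjpair produces an adjacent pair whose difference yields a consistency guard at the correct depth, and splitting on this guard separates the accepting from the non-accepting observations (this is the defining property of adjacency and difference). The precondition that no successor forces a consistency guard at depth less than~\(|\stl|\) is what guarantees the split happens \emph{at} the current state rather than being preempted higher up, so the recursion is well-founded and the guards are not overwritten later. Well-guardedness then holds essentially by construction: every guard constraint \(x_b \prec k\) that \rebuild places is, by design, either a consistency guard for an adjacent pair through that state or the constraint from a validity guard, which is precisely the disjunction in the definition of well-guardedness.

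\textbf{Absence of invalid states and the main obstacle.} Finally, for the absence of invalid states, I would invoke Proposition~\ref{pr:sound} and the soundness of the pruning: \searchprune removes invalid words from \tog, and \invalguard identifies the integer constant separating the two causes of invalidity, so the validity guard introduced routes the two invalidities into distinct branches, neither of which has \emph{both} successors pruned. \textbf{The hard part will be} the interaction between consistency guards and validity guards when both kinds of splits are required at overlapping depths: I must verify that introducing a validity guard to resolve an invalidity (as in Example~\ref{ex:valid}, where both reset choices lead to invalidity under a \(\true\) guard) does not destroy a consistency guard established elsewhere, and conversely that a consistency split does not reintroduce a pruned invalid branch. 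The condition \(g \subset g'\) in the well-guardedness definition and the depth ordering in the consistency-guard definition are the technical levers here; showing they compose coherently---so that the \emph{first} difference is always taken and validity guards refine rather than conflict with consistency guards---is where the real care is needed, and I would expect to lean on the fact that \rebuild introduces guards only when necessary, keeping the two kinds of constraints on disjoint or nested guard regions.
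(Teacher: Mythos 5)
There is a genuine gap, and it sits exactly where you flagged it, plus one place you did not. First, your argument for the absence of invalid states does not connect to the actual construction. You appeal to Proposition~\ref{pr:sound}, to \searchprune having pruned invalid words, and to \invalguard routing the two invalidities into distinct branches. But \rebuild begins by \emph{deleting} the whole subtree under~\(\stl\) and reconstructing it from scratch, so earlier pruning of the old subtree says nothing about the freshly created states. What actually secures the property in the paper is much more direct: \rebuild tests ``\(\stl.(a,g,r)\) is not invalid'' before creating each language state, and \findguard guarantees that every decision state keeps at least one valid reset successor, because it introduces a validity guard precisely when both reset choices inside a guard cell would be invalid. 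Your proposal never invokes these explicit checks, so this part of the statement is not established by your argument.

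Second, the ``hard part'' you identify---the interaction between consistency guards and validity guards at overlapping depths---is left unresolved in your proposal, yet it dissolves once you follow the structure of \findguard: it recursively \emph{refines a partition} of the current guard, splitting first on consistency guards extracted from adjacent pairs and then, inside each remaining cell whose two reset successors are both invalid, on validity guards. Every split strictly refines the partition, so the two kinds of constraints nest rather than conflict; this is why the paper can prove well-guardedness and consistency in one line each (every inconsistency below \(\stl\) yields, by the depth hypothesis, a consistency guard at depth at least \(|\stl|\), and \findguard installs all of them). A minor additional inaccuracy: completeness is not ``inherited from Proposition~\ref{pr:findpath_tdg_sound}'', since \rebuild does not call \(\findpath_\tdg\); it recurses on every action passed by some observation and labels the new states via \request, which is the paper's actual completeness argument.
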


This proposition tells us we can keep the timed decision graph
up-to-date with respect to observations (\ie,~complete and consistent)
while keeping the good properties that were ensured by the previous
algorithms.  It~remains to show how a candidate timed automaton can be
constructed from this structure.

\section{Building a candidate timed automaton}
\label{sec:rebuild}

Following the active learning approach, our purpose is to identify a subset of nodes in the decision graph that will correspond
to locations of the automaton, and then fold transitions according to
an order on the remaining nodes.~\cite{GJP06} discusses such orders when resets are fixed. To handle RERA we first have to fix a \emph{reset strategy} before applying the original method. This gives as many hypotheses as we have strategies.


\paragraph{Reset selection.}
We present the general framework but do not discuss good strategies in
the following. Such strategies would rely on heuristics.
\begin{definition}
  A \emph{reset strategy} over a timed decision graph \(\tdg\) is a
  mapping \(\rstrat\colon\Std\rightarrow \{\true,\false\}\), assigning
  a decision to each decision states.
\end{definition}
A reset strategy~$\pi$ is said \emph{admissible}
if for any state \(\std\), there is a language state \(\stl\) such
that \((\std,\rstrat(\std),\stl)\in\Trans\).
\begin{restatable}{proposition}{arsexists}
In a timed decision graph constructed using the 
\findpath and \rebuild
algorithms
and
where every scheduled call to \rebuild has been done, there always
exists at least one admissible reset strategy.
\end{restatable}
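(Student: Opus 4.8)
The plan is to prove existence of an admissible reset strategy by exhibiting, for every decision state, at least one outgoing transition that survives the pruning performed by \searchprune. Recall that a reset strategy is admissible exactly when every decision state~\(\std\) retains a successor language state for the choice \(\rstrat(\std)\); so it suffices to show that no decision state in the final TDG has had \emph{both} its \(\true\)- and \(\false\)-successors pruned. The strategy would then be defined pointwise: for each \(\std\), pick any \(\rstrat(\std)\in\{\true,\false\}\) for which the corresponding transition is still present. Since the choices at distinct decision states are independent, local existence of a surviving successor at each \(\std\) immediately yields a global admissible strategy.

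First I would set up the invariant that the TDG maintained by \findpath and \rebuild never contains a decision state all of whose successors are invalid. The key is Example~\ref{ex:valid} together with the discussion in Section~\ref{sub:inval}: whenever a decision state following a \emph{valid} language state would have all its successors pruned because of two distinct causes of invalidity, this triggers a scheduled call to \rebuild, which introduces a \emph{validity guard} splitting the transition so that the two invalidities are separated into different branches. By hypothesis every scheduled call to \rebuild has been executed. Thus after rebuilding, each decision state sits below a guard fine enough that at least one of its two reset-successors is valid. I would invoke Proposition~\ref{pr:searchprune_sound}, which guarantees that \searchprune prunes \emph{exactly} the invalid words and makes \emph{no valid word unreachable}, to conclude that the valid successor is genuinely retained in the structure.

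The argument then proceeds by the following steps. (1)~Observe that the root language state~\(s_0\) is valid and reachable. (2)~By the rebuilding guarantee (the soundness proposition for \rebuild) the constructed subtree is complete, consistent, well-guarded, and \emph{free of invalid states}; so in the portion of the TDG produced by \rebuild every decision state has a valid successor by construction. (3)~For decision states created directly by \findpath/\addword but not yet rebuilt, Proposition~\ref{pr:searchprune_sound} ensures invalid subtrees have been pruned while preserving reachability of valid words, so again at least one successor survives---and if the situation of Example~\ref{ex:valid} had arisen, a call to \rebuild would have been scheduled and, by assumption, already carried out. (4)~Combining (2) and (3), every decision state of the final TDG admits a surviving (valid) successor, and choosing it defines the desired \(\rstrat\).

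The main obstacle I expect is step~(3): ruling out the \emph{pathological} case where a decision state following a valid language state loses both successors without a \rebuild being scheduled. I would need to argue carefully that the only way both successors of a decision state get pruned is through two distinct $K$-closed words with resets witnessing invalidity on the two branches, and that the definition of invalidity together with the well-guardedness property forces such a collision to be detected and scheduled for rebuilding. In other words, the crux is to show that ``both successors invalid at a valid node'' is precisely the trigger condition for \invalguard/\rebuild, so that the hypothesis ``every scheduled call to \rebuild has been done'' closes the gap and no decision state is left without a successor.
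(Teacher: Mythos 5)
Your proposal is correct and follows essentially the same route as the paper's proof: pruning happens only in \searchprune, which schedules a call to \rebuild exactly when a decision state is left with no successor, and \rebuild (via \findguard) guarantees that every decision state it creates keeps at least one valid successor, so the hypothesis that all scheduled rebuilds have been executed closes the argument, with the strategy then defined pointwise as you describe. The ``main obstacle'' you flag in step~(3) is in fact dispatched directly by the \searchprune pseudocode, which explicitly tests whether a decision state has zero remaining successors and schedules the rebuild of the subtree rooted at its parent language state precisely in that case, so no careful collision argument is needed beyond reading off that trigger condition.
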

An admissible reset strategy is used to prune the decision graph in such
a way that only one reset combination is considered for each transition.
%
The effect of an admissible reset strategy \(\rstrat\) on its timed
decision graph~\(\tdg\) is the TDG \(\rstrat(\tdg)\) defined
from~\(\tdg\) by keeping only outgoing transitions from decision
states that agree with~\(\rstrat\).
We call this TDG the \emph{resulting
graph} of~\(\rstrat\). 
It can be seen quite directly that a resulting graph always has
exactly one successor to each decision state. Using this, we~can
notice that those resulting graphs are very close
to timed decision trees of~\cite{GJP06}, in which no
decision states exist and the transitions from language states to
language states directly hold the (only possible) reset.

\paragraph{Orders and folding.}
Once an admissible reset strategy is fixed, it is possible to fold the resulting graph into a RERA. This is made through the use of a preorder on states: we want to find a maximal subset for this order.

We define the \emph{height} of a language state \(\stl\), noted \(\height(\stl)\), as the height of the subtree it is the root of. A~preorder \(\preleq\) on language states is said \emph{height-monotone} when \(\stl\preleq\stl'\) implies \(\height(\stl)\leq\height(\stl')\). 
\begin{definition}
    Let \(\tdg\) be a timed decision graph and \(\preleq\) a preorder
    on its language states. A prefix-closed subset \(U\) of \(\tdg\)
    is called 
        \emph{\(\preleq\)-closed} if \(\stl\preleq U\) for all successors of \(U\) and 
        \emph{\(\preleq\)-unique} if for all \(\stl,\stl'\in U\), \(\stl\neq\stl'\Rightarrow\neg(\stl\preleq\stl')\).
\end{definition}
\(\preleq\)-closedness is used to construct a RERA by folding the successors of \(U\) into comparable states of \(U\). \(\preleq\)-uniqueness is useful to bound the number of states in~\(U\) and thus the size of the resulting automaton. 

The following lemma (Lemma~6.2 in~\cite{GJP06}) ensures that there always
exists a satisfying set of states~\(U\). For its constructive proof, we refer the
reader to the original paper.
\begin{lemma}
    Let \(\preleq\) be a height-monotone preorder on states in a resulting graph \(\rstrat(\tdg)\). Then there exists a \(\preleq\)-closed and \(\preleq\)-unique prefix-closed subset of the language states of \(\rstrat(\tdg)\). 
\end{lemma}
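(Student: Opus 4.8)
The plan is to construct the set \(U\) explicitly by a single greedy sweep over the finite tree \(\rstrat(\tdg)\), where the only real design choice is the order in which language states are visited. First I would fix that order. Since \(\rstrat(\tdg)\) is a tree, \(\height\) decreases strictly along every parent-to-child edge, so any proper descendant of a state has strictly smaller height; consequently, listing the language states by \emph{non-increasing} height places every state strictly after all of its ancestors, and groups the states of a common height into consecutive blocks. Within one height level, where no two states are in an ancestor/descendant relation, I would refine the order into any linear order \(<_L\) that puts \(\preleq\)-\emph{larger} states first, that is, such that \(u\prec u'\) (meaning \(u\preleq u'\) and \(\neg(u'\preleq u)\)) implies \(u'<_L u\). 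Such a refinement exists because \(\preleq\) restricted to a level induces a partial order on its \(\preleq\)-equivalence classes, and a finite partial order always admits a linear extension.

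The greedy sweep then processes the states in this order: the root is put into \(U\); any state \(\stl\) whose parent is not in \(U\) is skipped together with its whole subtree; and a state \(\stl\) whose parent is in \(U\) is added to \(U\) exactly when no already-chosen \(u\in U\) satisfies \(\stl\preleq u\), and is otherwise folded (left out). Prefix-closedness is then immediate: a state is added only once its parent belongs to \(U\), and since ancestors have strictly larger height they have all been processed earlier, so the root lies in \(U\) and \(U\) is closed under taking ancestors.

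\(\preleq\)-closedness is equally direct. A successor of \(U\) is a child \(\stl\) of some \(u\in U\); because \(u\in U\) the state \(\stl\) was actually processed (it was not skipped), and it was either added, in which case \(\stl\preleq\stl\in U\), or folded against some chosen \(u'\in U\) with \(\stl\preleq u'\). In both cases \(\stl\preleq U\), as required.

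The step I expect to be the crux is \(\preleq\)-uniqueness, since this is precisely where the top-down discipline imposed by prefix-closedness clashes with the wish to discard a \(\preleq\)-smaller state that may have entered \(U\) before its \(\preleq\)-larger counterpart was even reached; height-monotonicity is what reconciles the two. For distinct \(u,u'\in U\) with \(u\) processed first, I would argue by cases. If \(\height(u)>\height(u')\) then \(u\preleq u'\) is impossible by height-monotonicity, while \(\neg(u'\preleq u)\) holds because the greedy test blocked \(u'\) against the already-present \(u\). If \(\height(u)=\height(u')\) the two states lie in the same level, so \(u<_L u'\) forbids \(u\prec u'\); an equivalence \(u\preleq u'\preleq u\) would have made the greedy test fold \(u'\) against \(u\in U\), contradicting \(u'\in U\); hence \(\neg(u\preleq u')\), and \(\neg(u'\preleq u)\) again follows from the greedy test. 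Thus no two distinct elements of \(U\) are \(\preleq\)-comparable, giving \(\preleq\)-uniqueness, and termination is trivial because \(\rstrat(\tdg)\) is finite. This mirrors the Angluin-style closure of \cite{GJP06}, but spelling out the height argument is the part that requires care.
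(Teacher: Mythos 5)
Your proof is correct, but note that the comparison here is somewhat one-sided: the paper does not prove this lemma at all --- it is quoted as Lemma~6.2 of~\cite{GJP06}, and the reader is explicitly referred to that paper for the constructive proof. On its own merits, your greedy sweep is sound and self-contained. Prefix-closedness and \(\preleq\)-closedness follow directly from the discipline of only processing children of states already in \(U\), and the genuinely delicate point --- which you correctly isolate --- is \(\preleq\)-uniqueness in the direction \emph{not} checked by the greedy test, namely \(\neg(u\preleq u')\) when \(u\) enters \(U\) before \(u'\). A naive greedy construction (repeatedly add any successor of \(U\) not \(\preleq\)-dominated by \(U\)) fails exactly there, since an already-chosen state may turn out to lie \(\preleq\)-below a later addition. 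Your two safeguards close this hole: height-monotonicity rules out \(u\preleq u'\) across distinct height levels, and the \(\preleq\)-larger-first linear refinement within a level ensures that whenever \(u\prec u'\), the state \(u'\) is processed first, so \(u\) is folded against it (or, by transitivity of the preorder, against whatever \(u'\) was itself folded into) rather than both being kept; the residual case of \(\preleq\)-equivalent states is eliminated by the greedy test itself. The net effect is a complete constructive proof of exactly the kind the paper outsources, and it moreover yields an explicit single-pass algorithm over the finite tree for computing \(U\), which is what the subsequent \(U_{\preleq}\)-merging construction actually consumes.
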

Using such a subset, we can fold the resulting graph into a RERA as follows: 
\begin{definition}
    Let \((\Obs,\tdg)\) be a consistent observation structure, \(\rstrat\) an admissible reset strategy and \(\preleq\) a preorder on language states of \(\rstrat(\tdg)\). Consider a \(\preleq\)-unique, \(\preleq\)-closed and prefix-closed subset~$U$ of \(\rstrat(\tdg)\). Then a \emph{\(U_{\preleq}\)-merging} of \((\Obs,\tdg)\) according to \(\rstrat\) is a RERA 
    \(\tuple{U,\epsilon,\Clocks,\Trans,\Accept}\) such that 
        \(\Accept=\{u\in U\mid \stlabel(u)=\{\ltrue\}\}\) and 
        for any language node \(u.(a,g,r)\) of \(\rstrat(\tdg)\) with \(u\in U\), there is exactly one edge of the form \((u,a,g,r,u')\in\Trans\) with \(u.(a,g,r)\preleq u'\).
    Notice that, by the second condition, a \(U_{\preleq}\)-merging RERA is deterministic.
\end{definition}
Furthermore, if the observation structure is complete, a \(U_{\preleq}\)-merging generalizes the observations obtained so far. 

\paragraph{Constructing a candidate RERA.}
Using the results of the previous subsections, we can now construct a
candidate RERA from our observation structure. 
All~admissible reset strategies can be constructed by branch and bound. 
Then a merging is constructed for each resulting graph, and equivalence queries are launched.  


For each of the RERA constructed by merging, either a counter-example
will be returned by the equivalence query, or the candidate is deemed
correct. In the latter case, we return this RERA; in the former case,
we include the counter-example in our observation structure and repeat
the process.

\section{Conclusion}

In this paper, we propose an active learning method for deterministic reset-free event recording automata. 
We add a key feature to the state of the art:
invalidity, that allows to detect incorrect guesses of resets when they are not tied to observations.
This required to rework all the data structures and algorithms involved to 
handle invalidity on the fly. 
Most importantly, this brings the lacking notion to scale up to the class of deterministic timed automata (DTAs).

A clear future work is to generalize this method to actually handle DTAs. This mostly requires to handles resets of sets of clocks instead of single ones. As the complexity would be greatly increased, this calls for some optimization. 
An promising addition would be to use an implicit structure. Instead of storing all possible reset configurations, only storing a small set of them at the same time would decrease the memory cost. As the models are built directly from observations, and not from previous states, the computational overhead may be limited.
An other interesting trail for future development is to find a way to build a timed automaton from the observation structure that exploits the different admissible reset strategies without building all of them. Works on approximate determinization of timed automata through games \cite{BJSK11} deal with similar problems and offer interesting leads. 
Finally, in \cite{GJP06}, the authors propose to refine the adjacent pairs into \emph{critical pairs}, that have a minimal set of differences. 
This allows to better identify the guards to be added, and thus can have a positive effect on both the size of the constructed models and the computational cost. Sadly, no precise procedure is given to construct the pairs, so creating one would be beneficial to the approach. 
More generally, studying the efficiency of this algorithm and of the variants proposed as future work could help better understand the applicability and bottlenecks of the approach.
\bibliographystyle{alpha}
\bibliography{biblio}

\iflong
\clearpage
\appendix
\section{Algorithms}
\label{app:algos}
\label{appendix-algos}
\begin{algorithm}
    \SetAlgoLined
    \caption{Adding a new observed timed word in \(\tdg\)}
    \label{findpath_tdg}
    {\sffamily \(\findpath_\tdg\)}\;
    \KwIn{a timed word \(w_t\) and its observation \(o\), its past \(p_t\) 
    , a valuation \(v\) and a language state \(\stl\)}
    \nl \eIf{\(w_t=\epsilon\)}{
    \nl   add \(o\) to the set of labels of \(\stl\)
        }{
    \nl   \((t,a).w'_t=w_t\)\;
    \nl   \eIf{\(\exists(\stl,g,a,\std)\in \Trans\)}{
    \nl   \For{\((\stl,g,a,\std)\in \Trans\)}{
    \nl     \If{\(v+t\models g\)}{
    \nl       \For{\((\std,r,\stl')\in \Trans\)}{
    \nl         \eIf{\(r=\true\)}{
    \nl           \(\findpath_\tdg(w'_t,o,p_t.(t,a),(v+t)_{[x_a\leftarrow0]},\stl')\)
                }{
    \nl           \(\findpath_\tdg(w'_t,o,p_t.(t,a),v+t,\stl')\)
                }
              }
    \nl       break
            }      
          }
          }{
    \nl   \(\addword_\tdg(w_t,o,p_t,\stl)\)
          }
      }
  \end{algorithm}
  \begin{algorithm}
    \SetAlgoLined 
    \caption{Adding a new observed timed word in \(\tog\)}
    \label{findpath_obs}
    {\sffamily \(\findpath_{\tog}\)}\\
    \KwIn{a timed word \(w_t\) and its observation \(o\), its past \(p_t\) and 
    reset history \(r\), a valuation \(v\) and an observation state \(\sto\)}
    \nl \eIf{\(w_t=\epsilon\)}{
    \nl   add \(o\) to \(\olabel(\sto)\)\;
    \nl   \If{\(|\olabel(\sto)|>1\)}{
    \nl   \(\sto.\invalid=True\)\;  
    \nl   \(\std=\parent(\sto)\)\;
    \nl   \While{all successors of \(\std\) are invalid}{
    \nl   \(\sto=\parent(\std)\)\;
    \nl   \(\sto.\invalid=True\)\;
    \nl   remove the last letters from \(p_t\) and \(r\)\;
    \nl   \(\std=\parent(\sto)\)\;
    }
    \(\searchprune(p_t,r,\stinit,\initv,\epsilon)\)}
    }{
    \nl   \((t,a).w'_t=w_t\)\;
    \nl   \eIf{\(\exists(\sto,g,a,\std)\in \Trans\)}{
    \nl   \For{\((\sto,g,a,\std)\in \Trans\)}{
    \nl     \If{\(v+t\models g\)}{
    \nl       \For{\((\std,r,\sto')\in \Trans\)}{
    \nl         \If{\(|\olabel(\sto')|=1\)}{
    \nl         \eIf{\(r=\true\)}{
    \nl           \(\findpath_{\tog}(w'_t,o,p_t.(t,a),r.\true,(v+t)_{[x_a\leftarrow0]},\sto')\)
                }{
    \nl           \(\findpath_{\tog}(w'_t,o,p_t.(t,a),r.\false,v+t,\sto')\)
                }}
              }
    \nl       break
            }      
          }
          }{
    \nl   \(\addword_{\tog}(w_t,o,p_t,\sto)\)
          }
      }
  \end{algorithm}
  \begin{algorithm}
    \SetAlgoLined
    \caption{Extending \(\tdg\) to satisfy a new timed word}
    \label{addword_tdg}
    {\sffamily \(\addword_\tdg\)}\\
    \KwIn{a non-empty timed word \(w_t\), its observation \(o\), its past \(p_t\) 
    and a language state \(\stl\)}
    \nl   \((t,a).w'_t=w_t\)\;
    \nl   create \(\std=(\stl,a,\gtrue)\) in \(\Std\)\;
    \nl   \(w_s,z_s=\stl\)\;
    \nl   create \(\stl'=(w_s.(a,\gtrue),z_s\future)\)\; 
    \nl   create \(\stl''=(w_s.(a,\gtrue),{{z_s}_{[x_a\leftarrow0]}}\future)\)\;
    \nl   create \((\stl,a,\gtrue,\std)\), \((\std,\false,\stl')\) and 
    \((\std,\true,\stl'')\) in \(\Trans\)\;
    \nl   \eIf{\(w'_t=\epsilon\)}{
    \nl     label \(\stl'\) and \(\stl''\) by \(\{o\}\)
    }{
    \nl     \(o'=Request(p_t)\)\;
    \nl     label \(\stl'\) and \(\stl''\) by \(\{o'\}\)\;
    \nl     \(\addword_\tdg(w'_t,o,\stl')\)\;
    \nl     \(\addword_\tdg(w'_t,o,\stl'')\)
    }
  \end{algorithm}
  \begin{algorithm}
    \SetAlgoLined
    \caption{Extending \(\tog\) to satisfy a new timed word}
    \label{addword_tog}
    {\sffamily \(\addword_{\tog}\)}\\
    \KwIn{a non-empty timed word \(w_t\), its observation \(o\), its past \(p_t\) 
    and an observation state \(\sto\)}
    \nl   \((t,a).w'_t=w_t\)\;
    \nl   create \(s_d=(s_l,a,\reg(v+t))\) in \(\Std\)\;
    \nl   \(w_s,z_s=\sto\)\;
    \nl   create \(\sto'=(w_s.(a,\reg(v+t)),\reg(v+t))\)\; 
    \nl   create \(\sto''=(w_s.(a,\reg(v+t)),\reg(v+t)_{[x_a\leftarrow0]})\)\;
    \nl   create \((\sto,a,\reg(v+t),\std)\), \((\std,\false,\sto')\) and 
    \((\std,\true,\sto'')\) in \(\Trans\)\;
    \nl   \eIf{\(w'_t=\epsilon\)}{
    \nl     label \(\sto'\) and \(\sto''\) by \(\{o\}\)
    }{
    \nl     \(o'=Request(p_t)\)\;
    \nl     label \(\sto'\) and \(\sto''\) by \(\{o'\}\)\;
    \nl     AddWord(\(w'_t,o,\sto'\))\;
    \nl     AddWord(\(w'_t,o,\sto''\))
    }
  \end{algorithm}
  \begin{algorithm}
    \SetAlgoLined
    \caption{Requesting an observation.}
    \label{request}
    {\sffamily \request}\\
    \KwIn{A timed word \(w_t\).}
    \KwOut{A unit label in \(\{\{\ltrue\},\{\lfalse\}\}\)}
  \nl  \eIf{\(w_t\in \Dom(\Obs)\)}{
  \nl    \textbf{return} \(\Obs(w_t)\)
    }{
  \nl    make an membership query on \(w_t\) and add its result \(o\) to \(\Obs\)\;
  \nl    \(\findpath_{\tog}(w_t,o,\epsilon,\initv,\stinit)\)\;
  \nl    \textbf{return} \(o\)
    }
  \end{algorithm}
  \begin{algorithm}
    \SetAlgoLined
    \caption{Pruning \tdg~after detecting an invalid timed word with resets.}
    \label{searchprune}
    {\sffamily \searchprune}\\
    \KwIn{A non-empty timed word \(w_{t}\) and a set of resets \(r\) of same length, 
    a language state \(\stl\), a valuation \(v\) and an history 
    \(h_d\in(\Std\times\{\true,\false\})^{*}\).}
    \nl \((t,a).w_t'=w_t\)\;
    \nl \(r_a.r'=r\)\;
    \nl \For{\((\stl,g,a,\std)\in\Trans\)}{
    \nl \If{\(v+t\models g\)}{
    \nl \For{\((\std,r_a',\stl')\in\Trans\)}{
    \nl \If{\(r_a'=r_a\)}{
    \nl \eIf{\(w'_t=\epsilon\)}{
    \nl remove \((\std,r_a',\stl')\) from \(\Trans\) and recursively delet the subtree\;
    \nl \If{\(|\{(\std,\_,\_)\in\Trans\}|=0\)}{
    \nl Schedule the subtree rooted in \(\stl\) to be rebuilt.
        }
        }{
    \nl          \eIf{\(r_a=\true\)}{
    \nl          \(\searchprune(\stl',(v+t)_{[x_a\leftarrow0]},h_d.(\std,r_a))\)
                 }{
    \nl            \(\searchprune(\stl',(v+t),h_d.(\std,r_a))\)
              }
            }
    \nl        break
          }
        }
    \nl    break
      }
    }
  \end{algorithm}


  \begin{algorithm}
    \SetAlgoLined
    \caption{Finding an adjacent pair corresponding to an inconsistency.}
    \label{adjpair}
    {\sffamily \adjpair}\\
    \KwIn{A timed word with resets \(w\) such that \(\Obs(w_{tr})=\ltrue\) and a second one 
    \(w'\) such that \(\Obs(w')=\lfalse\).}
    \KwOut{An adjacent pair}
    \nl \While{\(\pnot(\forall i\forall a,\ |v_i(x_a)+t_i-(v'_i(x_a)+t'_i)|<1 \vee
    ((v_i(x_a)+t_i>K)\wedge(v'_i(x_a)+t'_i>K)))\)}{
    \nl \(w''=0.5w+0.5w'\)\\
    \nl \eIf{\request(w'')}{
    \nl \(w=w''\)
    }{
    \nl \(w'=w''\)
    }
    }
    \nl\For{\(i\in[0,n]\), \(a\in\Sigma\)}{
    \nl  \If{\(\floor{v_i(x_a)+t_i}\neq\floor{v'_i(x_a)+t'_i}\wedge v_i(x_a)+t_i\not\in\bbN
      \wedge v'_i(x_a)+t'_i\not\in\bbN\)}{
    \nl    \eIf{\(v_i(x_a)+t_i < v'_i(x_a)+t'_i\)}{
    \nl      \(\lambda=(\floor{v'_i(x_a)+t'_i}-(v_i(x_a)+t_i))/(v'_i(x_a)+t'_i-(v_i(x_a)+t_i))\)\\
    \nl      \(w''=\lambda.w'+(1-\lambda).w\)
        }{  
    \nl      \(\lambda=(\floor{v_i(x_a)+t_i}-(v'_i(x_a)+t'_i))/(v_i(x_a)+t_i-(v'_i(x_a)+t'_i))\)\\
    \nl      \(w''=\lambda.w+(1-\lambda).w'\)
        }
    \nl    \eIf{\(\request(w'')=\ltrue\)}{
    \nl      \(w=w''\)
        }{
    \nl      \(w'=w''\)
        }
      }
    }
    \nl \(w''=0.5w+0.5w'\)\\
    \nl \eIf{\(\request(w'')=\ltrue\)}{
    \nl  return \((w',w'')\)
    }{
    \nl  return \((w,w'')\)
    }
  \end{algorithm}
  
We define common borders for the use of the \invalguard algorithm.
\begin{definition}
    For two K-equivalence classes \(z,\ z'\), we define their \emph{common borders} 
  \(\cb(z,z')\subset\Clocks\times\{<,\leq,\top\}\times\bbN\) as: 
  \[
  \inference[]{z\subset x=k,\ z'\subset k-1<x<k}{(x,<,k)\in\cb(z,z')}
  \inference[]{z\subset x=k,\ z'\subset k<x<k+1}{(x,\leq,k)\in\cb(z,z')}
  \]\[
  \inference[]{z\subset k-1<x<k+1,\ z'\subset k<x<k+1}{(x,\top,k)\in\cb(z,z')}
  \]
  and the same rules, inverting $z$ and $z'$.
  \end{definition}
  The common borders are used both to measure the proximity of K-equivalence classes and to create validity guards.
\begin{algorithm}
    \SetAlgoLined
    \caption{Finding a guard to separate two invalidities.}
    \label{invalguard}
    {\sffamily \invalguard}\\
    \KwIn{two sets $W_1$ and $W_2$ of timed words with resets corresponding to invalidities. Each set of words passes through a \(w_{gr}\) and then shares a same K-equivalence class to play a common action $a$. $W_1$ is invalid for reset $\true$ while $W_2$ is invalid for reset $\false$.}
    \KwOut{A validity guard}
    \nl \(n=|w_{gr}|\)\\
    \nl We note the shared K-equivalence classes to play action $a$ \(z^1_n\) for \(W_1\) and \(z^2_n\) for \(W_2\)\\
    \nl \(W'_1,W'_2=\emptyset\)\\
    \nl \eIf{\(\cb(z^1_n,z^2_n)=\emptyset\)}{
    \nl \For{\(w_t^1\in W_1\))}{
    \nl \For{\(w_t^2\in W_2\)}{
    \nl \(w=0.5w_t^1[1,n+1]+0.5w_t^2[1,n+1]\)\\
    \nl \({w'}_t^1=w\op w_t^1\) and \({w'}_t^2=w\op w_t^2\)\\
    \nl \request(${w'}_t^1$); \request(${w'}_t^2$)\\
    \nl \(W'_1+=\{{w'}_t^1\}\), \(W'_2+=\{{w'}_t^2\}\)
    }
    }
    \nl \If{the new observations have made \(w_{gr}\) invalid}{
    \nl   stop
    }
    \nl \If{\(w\) completed with the resets of \(w_{gr}\) plus \(\true\) is invalid}{
    \nl   \invalguard(\(W'_1,W_2,w_{gr},a\))
    }
    \nl \If{\(w\) completed with the resets of \(w_{gr}\) plus \(\false\) is invalid}{
    \nl   \invalguard(\(W_1,W'_2,w_{gr},a\))
    }
    \nl \If{\(w\) completed with the resets of \(w_{gr}\) plus \(\true\) and \(\false\) are valid}{
    \nl   \invalguard(\(W'_1,W_2,w_{gr},a\)) \invalguard(\(W_1,W'_2,w_{gr},a\))
    }}{
      \nl we call \(\stl\) the language state in which \(w_{gr}\) ends.\\
      \nl return \((\stl,a,x,\sim,k)\) where \((x,\sim,k)\in\cb(z^1_n,z^2_n)\)
    }
  \end{algorithm}
  
  \begin{algorithm}
    \SetAlgoLined
    \caption{Rebuilds a subtree of \tdg~to handle consistency.}
    \label{rebuild}
    {\sffamily \rebuild}\\
    \KwIn{A (valid) language state \(\stl\)}
    \nl Suppress recursively all successors of \(\stl\)\;
    \nl \For{\(a\in\Sigma\) such that there is at least an observation passing \(\stl.(a,\true)\)}{
    \nl \For{each guard \(g\in\findguard(\stl,a,\true)\)}{
    \nl create \(\std=(\stl,a,g)\) and \((\stl,a,g,\std)\in\Trans_\Sigma\)\;
    \nl \If{\(\stl.(a,g,\true)\) is not invalid}{
    \nl create \(\stl'=\stl.(a,g,\true)\)\;
    \nl \(\stlabel(\stl'')=\request(\stl'')\)\;
    \nl \rebuild(\(\stl'\))
    }\nl \If{\(\stl.(a,g,\false)\) is not invalid}{
    \nl create \(\stl''=\stl.(a,g,\false)\)\;
    \nl \(\stlabel(\stl'')=\request(\stl'')\)\;
    \nl \rebuild(\(\stl''\))
    }
    }
    }
  \end{algorithm}
  \begin{remark}
    In the rebuild function, we use \request on guarded words with resets instead of timed word. The extension is quite simple thanks to the resets, as searching in \tog~if an observation modelling the argument exists is only a dive in the tree, and if none is found, making an membership query from the last guess is the same.
  \end{remark}
  \begin{remark}
    As written, \rebuild completely erases the subtree and then reconstructs it. An obvious optimization is to only suppress transitions and nodes when necessary to avoid invalidity or inconsistency. We do not develop this here to keep the algorithm short and simple.
  \end{remark}
  \begin{algorithm}
    \SetAlgoLined
    \caption{Find a partition in guards to be applied to an action in a language node}
    \label{findguard}
    {\sffamily \findguard}\\
    \KwIn{A (valid) language state \(\stl\), an action \(a\) and a guard \(g\)}
    \KwOut{a partition of \(g\)}
    \nl \eIf{there is an adjacent pair passing \(\stl.(a,g)\) from which a consistency guard at depth \(|\stl|\) can be deduced}{
    \nl let g' be this guard\;
    \nl  \textbf{return} \(\findguard(\stl,a,g\wedge g')\cup\findguard(\stl,a,g\wedge \neg g')\)
    }{
    \nl \eIf{\(\stl,g,\true\) or \(\stl,g,\false\) is not invalid}{ 
    \nl \textbf{return} \(\{g\}\)
      }{
    \nl let \(\stl,a,g',x,\sim,k\) for \(g\subseteq g'\) be a validity guard that differentiate the invalidities.\\
    \nl Let \(g''=x<k\) if $\sim=<$ and \(g''=x\leq k\) otherwise\\
    \nl  \textbf{return} \(\findguard(\stl,a,g\wedge g'')\cup\findguard(\stl,a,g\wedge \neg g'')\)
      }
    }
  \end{algorithm}

\section{Proofs}
\label{app:proofs}
We conduct here the proofs of different claims. 

\subsection{Proofs of section \ref{sec:prelim}}
\clockcomb*
\begin{proof}
    The proof is made by induction on \(w_t^3\). For \(\epsilon\), the only valuation 
    encountered is \(v_{0,r}^3=\initv=\lambda.v_{0,r}^1+(1-\lambda).v_{0,r}^2\). 
    For \(w_t^j={w'^j_t}.(t^j,a)\), \(j\in[1,3]\), assume that we have the property for 
    all valuations reached along the prefixes, and especially that for all clocks 
    \(v'^3_r(x_a)=\lambda.v'^1_r(x_a)+(1-\lambda).v'^2_r(x_a)\) for the last valuations 
    encountered along the \(w'^j_t\). Then we have that 
    \(v'^3_r+t_3={v'^3_r}+\lambda.t^1+(1-\lambda).t^2=\lambda.(v'^1_r+t^1)+(1-\lambda).(v'^2_r+t^2)\) 
    and by applying the resets commended by $r$ we obtain the result desired for the last 
    valuation.  
\end{proof}

\subsection{Proofs of section \ref{sec:structure}}
We prove formally that TDGs and TOGs are trees.
\begin{proposition}
  The part of a timed decision graph reachable from~$s_0$ is a bipartite tree.
\end{proposition}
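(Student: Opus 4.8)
The plan is to establish two things: that the reachable subgraph is bipartite, which is immediate, and that it is a tree, which is the real content. For bipartiteness I would simply read off the definition of $\Trans$: every transition either has the shape $(\stl,a,g,\std)$ with $\stl\in\Stl$ and $\std\in\Std$, or the shape $(\std,r,\stl')$ with $\std\in\Std$, $\stl'\in\Stl$ and $r\in\{\true,\false\}$. No edge joins two language states or two decision states, so the partition $\Sta=\Stl\uplus\Std$ witnesses bipartiteness. To obtain the tree property I would prove the equivalent statement that every state reachable from $s_0$ admits a unique directed path from $s_0$.

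First I would introduce a depth measure, both to guarantee acyclicity and to carry an induction. Assign to a language state $(w_{gr},z)$ the value $2|w_{gr}|$ and to a decision state $(\stl,a,g)$ with $\stl=(w_{gr},z)$ the value $2|w_{gr}|+1$. A direct check on the two transition shapes shows depth increases by exactly one along every edge: from $\stl$ to $\std=(\stl,a,g)$ it goes from $2|w_{gr}|$ to $2|w_{gr}|+1$, and from $\std$ to either child $(w_{gr}.(g,a,r),\dots)$ it goes from $2|w_{gr}|+1$ to $2(|w_{gr}|+1)$. Hence the reachable subgraph is acyclic, and $s_0$ — the unique state of depth $0$, since it is the only one with empty word — has no incoming edge and is the only source.

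The crux is to show that every reachable non-root state has exactly one incoming edge. For a decision state $\std=(\stl,a,g)$ this is immediate, since by definition the only transitions into $\std$ are $(\stl,a,g,\std)$, so its unique parent is $\stl$. For a non-root language state $\stl_c=(w_{gr}^c,z_c)$ whose word ends with a letter $(g,a,r)$, any incoming edge comes from a decision state $\std=(\stl_p,a,g)$ via the $r$-labelled transition, with $\stl_p=(w_{gr}^p,z_p)$ and $w_{gr}^p$ equal to $w_{gr}^c$ with its last letter deleted. Here lies the only delicate point: because the child zone is obtained from $z_p$ through the (non-injective) future and reset operations, the word $w_{gr}^c$ alone need not a priori pin down $z_p$, so I must rule out two distinct reachable parents sharing the word but differing in zone. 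I would do this with a preliminary lemma, proved by induction on depth: for every \emph{reachable} language state $(w_{gr},z)$, the zone $z$ is exactly the final zone of the corresponding zone word with resets of $w_{gr}$. This holds at $s_0$, whose zone is $\{\initv\}\future$, and is preserved by every transition, which applies precisely the operation of that construction. Consequently, among reachable states the zone is a function of the word, so $\stl_p$ is uniquely determined by $\stl_c$; the triple $(\stl_p,a,g)$ then fixes $\std$, and the reset $r$ recorded in the last letter fixes whether the edge is the $\true$- or the $\false$-edge. Hence in-degree one.

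Finally I would combine these facts: the reachable subgraph is connected to $s_0$ by construction, $s_0$ has in-degree $0$ and every other reachable node in-degree $1$, so the number of edges equals the number of non-root nodes, that is $|V|-1$; a connected graph with $|V|-1$ edges is a tree, and we have already shown it is bipartite, which concludes. I expect the main obstacle to be exactly the zone-determinacy lemma of the previous paragraph: it is what prevents the structure from collapsing into a mere DAG when the future operator merges distinct parent zones into a common child zone, and it is the one step that genuinely uses reachability rather than the bare syntactic shape of the states.
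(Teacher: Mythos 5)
Your proof is correct and takes essentially the same approach as the paper: bipartiteness read off the definition of \(\Trans\), acyclicity from the strict growth of the guarded word along edges, and uniqueness of the parent of a language state via the key observation that, among reachable states, the zone is determined by the word through the inductive zone-word-with-resets computation starting from \(s_0\). Your write-up is merely more explicit (the depth measure and the final edge count), but the zone-determinacy lemma you isolate is exactly the paper's argument that the unknown parent zone \(z'\) can be recomputed from the root.
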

\begin{proof}
    Consider a timed decision graph \(\tdg\).
    By definition of \(\Trans\), \(\tdg\) is bipartite. It is acyclic because any path 
    from a language state to an other one leads to a state a guarded word with resets of 
    strictly greater length. And any given state has exactly a unique predecessor, but 
    \(\stinit\) that has none. Indeed, for a decision state \(\std=(\stl,a,g)\) the only possible 
    predecessor is the language state \(\stl\), by definition of \(\Trans\); for 
    a language state different from \(\stinit\), \(\stl=(w.(g,a,r),z)\) the only 
    possible king of incoming transition in \(\Trans\) is \(\std=((w,z'),a,g),r,\stl\) with an a priori unknown \(z'\). But as 
    we dispose in \(w\) of the sequence of guards, and precise resets that occurred, 
    we can inductively compute \(z'\) by iterating taking the future of the current zone, 
    intersecting it with the guard and applying the desired reset, starting from 
    the zone in \(\stinit\). Thus there is only one possible \(z'\), and a unique predecessor to \(\stl\).
\end{proof}

\prsound*
\begin{proof}
By coverage, there are two timed words \(w_t, w'_t\in \Dom(\Obs)\) such that 
\(w_t,w'_t\models w_{zr}\) and \(\Obs(w_t)\neq \Obs(w'_t)\). For any given timed automaton, 
if \(w_{zr}\) is a subset of executions, then as every zone in this word is a K-equivalence class, 
it corresponds to the same sequence of pairs of locations and K-equivalence classes. It then comes that 
\(w_t\) and \(w'_t\) are both executions leading to equivalent configurations. Hence 
the automaton fails to model the differences of the corresponding observation, as such 
configurations have the same location, which can't be both accepting and not accepting.
\end{proof}

\subsection{Proofs of section \ref{sec:updates}}
\fptdgsound*
\begin{proof}
First of all, a call to \(\findpath_\tdg\) terminates, as recursive calls are in finite number and on 
words of strictly decreasing length, there is only 1 call to \(\addword_\tdg\) from \(\findpath_\tdg\) along 
each explored path and recursive calls to \(\addword_\tdg\) are in finite number and on words of 
strictly lower length. 

We now prove the rest of the property by induction on the calls to both \(\findpath_\tdg\) and \(\addword_\tdg\). 
We use the following induction hypothesis: 
A call to \(\addword_\tdg\)/\(\findpath_\tdg\) creates a subgraph that is complete (and takes into account 
the new observation) and such that for all reachable language states \(\stl\) in that 
subgraph, \(\mid label(s_l) \mid\geq 1\).
\begin{itemize}
    \item Basic case for \(\findpath_\tdg\). Here we have \(w_t=\epsilon\). In this case the 
    complete word to add was read before along the path, and the only performed action 
    is to add the observation to the label of \(\stl\). Hence the subgraph is complete with 
    respect to the new observation (by adding delays and actions no other reachable states 
    can correspond to that same word). No other states are created, hence the hypothesis 
    on the initial observation structure suffices to conclude that the subgraph is 
    complete and verifies that all labels have at least one element. 
    \item Basic case for \(\addword_\tdg\). As \(\addword_\tdg\) is never called on empty 
    words, we have \(w_t=(t,a)\). The call to AddWord adds a new decision state 
    \(\std=(\stl,a,\true)\) and two new language nodes \(stl'\) and \(\stl''\) corresponding
    to the effect of resetting or not \(x_a\) after the action. As \(\addword_\tdg\) 
    was called in \(\findpath_\tdg\), we know that no successors for action \(a\) existed 
    in \(\stl\) (as the graph is well defined and thus if one existed, one would have 
    covered \(w_t\)). As we are in the base case, the labels of \(\stl'\) and 
    \(\stl''\) are augmented with the observation \(o\), making them have a non-empty label. 
    The edges constructed by this call are in accord with the definition, and by the 
    hypothesis on the initial observation structure, the other successors of \(\stl\) are 
    complete (except with respect to the new word, but their sequence of letters 
    do not match) and have non-empty labels, hence in all cases we obtain a subgraph that is 
    complete (as \(\stl'\) and \(\stl''\) have no successors) and have only non-empty labels. 
    \item Inductive case of \(\findpath_\tdg\). We consider that \(w_t=(t,a)w'_t\). 
    Thus we enter the else in line 3. If the else case is called in line 12, we only make a 
    call to \(\addword_\tdg\) on \(w_t\) hence by induction  hypothesis, we have the 
    desired properties. Else, as there is only one guard such that \(w_t\) can go through 
    that guard, all successors satisfying a prefix of \(w_t\) are reached by the recursive 
    calls and by induction hypothesis they lead to complete subgraphs with non-empty labels. 
    Furthermore, other successors of \(\stl\) constitute, by the hypothesis on the initial 
    structure, a complete subgrap (except that the guard \(g\) of the considered transition 
    is not covered) with only non-empty labels, except for the new word that may not 
    be covered. But by uniqueness, they can not correspond to paths satisfying the new word 
    and thus we have our properties. 
    \item Inductive case for \(\addword_\tdg\). This case works exactly as the base case, 
    except that calls to "request" ensure that the new states have non-empty labels, and 
    the completeness with respect to \(w_t\) is ensured by the induction hypothesis. 
\end{itemize}
\end{proof}

\fptogsound*
\begin{proof} 
As for the proof of Prop.\ref{pr:findpath_tdg_sound} termination is clearly ensured by 
the structure of recursive calls. Notice that we do not count in this the calls to 
\(\findpath_{\tog}\) made in \(\request\), as they deal with different words. 
The same kind of induction on calls of \(\findpath_{\tog}\) and 
\(\addword_{\tog}\) suffices to prove correspondence and coverage if no ancestor of the state has a label of cardinality two. If one has, then by definition the state is invalid.
\end{proof}

\spsound*
\begin{proof}
Invalidity is detected along the calls to \(\findpath_{\tog}\), and the propagation of the invalid tag follows the definition for all ascendant states. No descendant are tagged, but this does not matter as they can not be reached without passing by invalid states as \(\tog\) is a tree. A call to \(\searchprune\) is then made, that targets exactly the root of the invalid subtree that has been detected, and prune it. As this is made for all detected invalidities and the subtrees are detected, when the procedure terminates, no language state invalid because of an invalidity detected in \(\findpath_{\tog}\) can be reached. Furthermore only the invalid subtree is suppressed, hence no valid state is made unreachable (as by definition all descendant of invalid states are invalid).

To conclude, it suffices to notice that every new membership query gives rise to a corresponding call to \(\findpath_{\tog}\), leaving no invalidity undetected.
\end{proof}

\rebuildsound*
\begin{proof}
    We prove these four properties independently.
    \begin{description}
      \item[Well-guardedness] The well-guardedness comes directly from \findguard, as only consistency guards corresponding to passing adjacent pairs and validity guards are added to the guards.
      \item[Validity] Validity comes from the validity of \(\stl\) (by hypothesis) and the validity test made for all descending language trees. Notice that there is also always a successor to any decision state, as the parent language state is valid, and \findpath ensures to construct a guard leaving a reset configuration open.
      \item[Consistency] The label of each created state receive an element, so it can not be empty. Furthermore, as each inconsistency in the original subtree rooted in \(\stl\) lead to a consistency guard of depth greater than \(|\stl|\) and \findguard adds all those consistency guards to the required path, no state can have a label of cardinality two. Thus, combined with the hypothesis on \(\stl\), the subtree is consistent.
      \item[Completeness] As the \rebuild function continuously calls itself as long as an observation passes the current word, all observations model a word. The condition of label is ensured by the consistency proof: as each state has a non-empty label and each inconsistency has been split by \findguard, the label of each observation is in the label of the state it models.
    \end{description}
\end{proof}

\subsection{Proofs of section \ref{sec:rebuild}}

\arsexists*
\begin{proof}
    To ensure that an admissible reset strategy exists, one only needs to check that every decision state has at least one successor. We only prune the graph in the \searchprune algorithm, and this algorithm schedules a call to \rebuild when no successors exist for a decision state. As \rebuild constructs a subtree where all decision states have at least a successor (thanks to the \findguard function that explicitly checks for this), we have our property.
\end{proof}

\else
\fi
\end{document}